\documentclass[]{article}
\setlength{\parskip}{\baselineskip}%

\title{Copula Modelling of Serially Correlated Multivariate Data with Hidden Structures}
\author{Robert Zimmerman, Radu V. Craiu and Vianey Leos-Barajas}
\date{\today}

\usepackage{float}
\usepackage{amsmath}
\usepackage{amsthm}
\usepackage{thmtools}
\usepackage{array}
\usepackage{booktabs}
\usepackage{amsfonts}
\usepackage{amssymb}
\usepackage{amsxtra}
\usepackage{enumerate}
\usepackage{bm}
\usepackage{bbm}
\usepackage{color}
\usepackage{mathtools}
\usepackage{tikz}
\usepackage{dcolumn}
\usepackage{graphicx}
\usepackage{natbib}
\bibliographystyle{unsrtnat}
\usepackage[colorlinks = true, linkcolor = blue, urlcolor = blue, citecolor = blue]{hyperref}

\setlength{\topmargin}{-.5 in}
\setlength{\oddsidemargin}{0in}
\setlength{\evensidemargin}{0in}
\setlength{\textheight}{9.in}
\setlength{\textwidth}{6.5in}
\pagestyle{plain}
\setlength{\skip\footins}{15pt}

\makeatletter
\let\save@mathaccent\mathaccent
\newcommand*\if@single[3]{%
  \setbox0\hbox{${\mathaccent"0362{#1}}^H$}%
  \setbox2\hbox{${\mathaccent"0362{\kern0pt#1}}^H$}%
  \ifdim\ht0=\ht2 #3\else #2\fi
  }
\newcommand*\rel@kern[1]{\kern#1\dimexpr\macc@kerna}
\newcommand*\widebar[1]{\@ifnextchar^{{\wide@bar{#1}{0}}}{\wide@bar{#1}{1}}}
\newcommand*\wide@bar[2]{\if@single{#1}{\wide@bar@{#1}{#2}{1}}{\wide@bar@{#1}{#2}{2}}}
\newcommand*\wide@bar@[3]{%
  \begingroup
  \def\mathaccent##1##2{%
    \let\mathaccent\save@mathaccent
    \if#32 \let\macc@nucleus\first@char \fi
    \setbox\z@\hbox{$\macc@style{\macc@nucleus}_{}$}%
    \setbox\tw@\hbox{$\macc@style{\macc@nucleus}{}_{}$}%
    \dimen@\wd\tw@
    \advance\dimen@-\wd\z@
    \divide\dimen@ 3
    \@tempdima\wd\tw@
    \advance\@tempdima-\scriptspace
    \divide\@tempdima 10
    \advance\dimen@-\@tempdima
    \ifdim\dimen@>\z@ \dimen@0pt\fi
    \rel@kern{0.6}\kern-\dimen@
    \if#31
      \overline{\rel@kern{-0.6}\kern\dimen@\macc@nucleus\rel@kern{0.4}\kern\dimen@}%
      \advance\dimen@0.4\dimexpr\macc@kerna
      \let\final@kern#2%
      \ifdim\dimen@<\z@ \let\final@kern1\fi
      \if\final@kern1 \kern-\dimen@\fi
    \else
      \overline{\rel@kern{-0.6}\kern\dimen@#1}%
    \fi
  }%
  \macc@depth\@ne
  \let\math@bgroup\@empty \let\math@egroup\macc@set@skewchar
  \mathsurround\z@ \frozen@everymath{\mathgroup\macc@group\relax}%
  \macc@set@skewchar\relax
  \let\mathaccentV\macc@nested@a
  \if#31
    \macc@nested@a\relax111{#1}%
  \else
    \def\gobble@till@marker##1\endmarker{}%
    \futurelet\first@char\gobble@till@marker#1\endmarker
    \ifcat\noexpand\first@char A\else
      \def\first@char{}%
    \fi
    \macc@nested@a\relax111{\first@char}%
  \fi
  \endgroup
}
\makeatother
\renewcommand{\bar}{\widebar}

\newcommand{\bvzero}{\mathbf{0}}

\newcommand{\bvg}{\mathbf{g}}

\newcommand{\bva}{\mathbf{a}}
\newcommand{\bvu}{\mathbf{u}}
\newcommand{\bvv}{\mathbf{v}}
\newcommand{\bvw}{\mathbf{w}}

\newcommand{\bvX}{\mathbf{X}}
\newcommand{\bvy}{\mathbf{y}}
\newcommand{\bvA}{\mathbf{A}}
\newcommand{\bvD}{\mathbf{D}}
\newcommand{\bvF}{\mathbf{F}}
\newcommand{\bvG}{\mathbf{G}}
\newcommand{\bvH}{\mathbf{H}}
\newcommand{\bvJ}{\mathbf{J}}
\newcommand{\bvL}{\mathbf{L}}
\newcommand{\bvT}{\mathbf{T}}
\newcommand{\bvV}{\mathbf{V}}
\newcommand{\bvU}{\mathbf{U}}

\newcommand{\bvY}{\mathbf{Y}}

\newcommand{\bveta}{\bm{\eta}}
\newcommand{\bvgamma}{\bm{\gamma}}

\newcommand{\bvphi}{\bm{\phi}}
\newcommand{\bvpi}{\bm{\pi}}
\newcommand{\bvpsi}{\bm{\psi}}
\newcommand{\bvxi}{\bm{\xi}}
\newcommand{\bvtheta}{\bm{\theta}}

\newcommand{\bvlambda}{\bm{\lambda}}

\newcommand{\bvGamma}{\bm{\Gamma}}
\newcommand{\bvTheta}{\bm{\Theta}}

\newcommand{\norm}[1]{\left\lVert#1\right\rVert}
\newcommand{\bb}[1]{\mathbb{#1}}
\newcommand{\sA}{\mathcal{A}}

\newcommand{\sN}{\mathcal{N}}

\newcommand{\sQ}{\mathcal{Q}}
\newcommand{\sS}{\mathcal{S}}

\newcommand{\sX}{\mathcal{X}}
\newcommand{\sY}{\mathcal{Y}}

\newcommand{\R}{\bb{R}}
\newcommand{\C}{\bb{C}}

\newcommand{\Prb}[1]{\bb{P} \left( #1 \right)}
\newcommand{\parPrb}[2]{\bb{P}_{#1} \left( #2 \right)}
\newcommand{\E}[1]{\bb{E} \left[ #1 \right]}

\newcommand{\one}[1]{\mathbbm{1}_{#1}}
\newcommand{\onee}[1]{\mathbbm{1}\left({#1}\right)}

\renewcommand{\exp}[1]{\hspace{1pt}\mathrm{exp}\left(#1\right)}
\newcommand{\llog}[1]{\hspace{1pt}\mathrm{log}\left(#1\right)}

\newcommand{\inD}{\stackrel{d}{\longrightarrow}}
\newcommand{\iid}{\stackrel{iid}{\sim}}

\newcommand{\argmax}[1]{\underset{#1}{\mathrm{argmax}}\,}

\newcommand{\X}{\bvX}

\newcommand{\sumin}{\sum_{i=1}^n}

\newcommand{\psiinv}{\psi_\theta^{-1}}
\newcommand{\oneT}{_{1:T}}

\newcommand{\hX}{\hat{X}}
\newcommand{\sumtT}{\sum_{t=1}^T}
\newcommand{\sumjK}{\sum_{j=1}^K}
\newcommand{\sumkK}{\sum_{k=1}^K}

\newcommand{\dif}{\mathop{}\!\mathrm{d}}

\DeclareMathOperator*{\argsup}{arg\,sup}

\newcommand\indep{\protect\mathpalette{\protect\independenT}{\perp}}
\def\independenT#1#2{\mathrel{\rlap{$#1#2$}\mkern2mu{#1#2}}}

\theoremstyle{plain}
\newtheorem{theorem}{Theorem}[section]
\newtheorem*{theorem*}{Theorem}
\newtheorem*{definition*}{Definition}
\newtheorem{lemma}{Lemma}[section]
\newtheorem{corollary}{Corollary}[section]
\newtheorem{proposition}{Proposition}[section]

\theoremstyle{remark}
\newtheorem*{remark}{Remark}

\begin{document}

\maketitle

\begin{abstract}
We propose a copula-based extension of the hidden Markov model (HMM) which applies when the observations recorded at each time in the sample are multivariate. The joint model produced by the copula extension allows  decoding of the hidden states based on information from multiple observations. However, unlike the case of independent marginals, the copula dependence structure embedded into the likelihood poses additional computational challenges. We tackle the latter using a theoretically-justified variation of the EM algorithm developed within the framework of inference functions for margins. We illustrate the method using numerical experiments and an analysis of house occupancy. 
\end{abstract}

{\it Keywords: copulas, EM algorithm, hidden Markov model, inference functions for margins. }

\section{Introduction}

Statistical models aim to capture the generating mechanism of the observed data. In many instances, one must consider various forms of {\it dark data} \citep{hand2020dark}, such as missing data that could not be observed at all or that has been modified during the sampling stage, or data that is unobservable but can be injected into the model as latent variables to add meaning and to enhance the interpretability of the statistical model. The latter approach is used to set up a hidden Markov model (HMM) in which the observed data consists of serially correlated observations on each item. The latent aspect of the model is a Markov chain with a discrete state space that evolves on the same time scale as the observed processes. The HMM postulates that the distribution of the observed data depends on the state of the Markov chain at that same time. Estimating the hidden structure can often illuminate the underlying workings of the system \citep{rabiner1986introduction,zucchini2017hidden}; in some cases, the Markov latent structure can be associated to a real hidden mechanism with some certainty. Examples of HMMs abound in numerous domains including economics \citep{du2020using}, medicine \citep{williams2020bayesian}, ecology \citep{ailliot2009space}, and sports \citep{otting2021copula}. 

Several distinct problems can be tackled using the HMM formulation. If one aims to identify the number of states that could have generated the data, and subsequently to \emph{decode} the state sequence, an unsupervised approach relies on clustering the observed features and assigning each to the state most likely to have generated it. On the other hand, the supervised version of the analysis implies that enough items have been monitored and their hidden states labeled to constitute a complete set of training data. In this case, one aims to predict the hidden states for observations outside the training sample. In either the supervised or unsupervised setting, one might also aim to impute or predict missing values from the observed features in a dataset that exhibits missing patterns.

Our work is motivated by situations in which the observations recorded at each time for each item are multivariate, and one must decide how to \emph{jointly} integrate the information they contain about the hidden states. When the observed data in an HMM setup is vector-valued, the state-dependent distributions can be constructed by assuming either contemporaneous conditional independence or longitudinal conditional independence \citep{zucchini2017hidden}. In the former case, the multivariate state-dependent distribution for the vector-valued observations is constructed as a product of marginal distributions, while in the latter case it is typically assumed to be multivariate Gaussian, which is highly tractable and easily implemented within the standard array of HMM algorithms \citep{CappeHMM, zucchini2017hidden}. However, the data which the multivariate Gaussian distribution can model is limited; in the bivariate case, for example, it cannot capture dependence in the extremes of the upper-right or lower-left tails (a property known as \emph{tail dependence} \citep{embrechts2001modelling}). One of our main aims here is to propose a general approach for integration of the continuous-valued information provided by multivariate observations in HMM models. 

Copulas have become a ubiquitous tool in modelling complex dependence structures. \cite{Sklar:1959} provided a theoretical foundation and demonstrated that any multivariate distribution can be represented by its marginal distributions and a copula that fully describes their interdependence. Moreover,  this decomposition is unique whenever the marginal distributions are continuous. Copulas have been widely used in analysis of survival data \citep{goet2, Hougaard:2000} and in various statistical applications, including actuarial science \citep{Valdez:1998}, hydrology \citep{Genest:2007if, Dupuis:2007qo} and finance \citep{cherubini2004copula, da2012modeling}, to name only a few. The developments of conditional copulas \citep{acar2011dependence, vog, gam-cc, cra-sabeti, levi2018bayesian} and vine copulas \citep{czado,min} have further expanded the range of dependence structures available for statistical modelling. 

The paper makes several contributions. First, we extend the modelling toolbox for HMMs with multivariate observations by considering a copula model for the distribution of observed data. Our numerical experiments show that when the copula is allowed to vary with the hidden state variable, the accuracy of hidden state identification increases. Because using copulas to jointly model the observed data presents additional computational challenges compared to the case of independent marginals or multivariate Gaussians, our second main contribution is to develop a new optimization procedure in which we integrate the \emph{inference functions for margins (IFM)} method of \cite{joe1996estimation}
within the \emph{ES algorithm} of \cite{elashoff2004algorithm}.

In the next section we introduce the model along with numerical evidence of the gains in decoding precision when using the copula-based joint modelling of the observed data. The computational methods used for estimation are introduced and theoretically justified in Sections \ref{sec:parameterestimation} and \ref{sec:TheoreticalResults}. Section \ref{sec:simsandapps} contains numerical experiments based on simulations as well as analysis of house occupancy data. The paper ends with a summary and a discussion of future work.

\section{A Copula-Based HMM}

\subsection{The General Model}\label{sub:genmod}

Let $X_1, X_2, \ldots$ be an unobserved discrete-time first-order Markov process taking values in a finite state space $\sX = \{1, 2, \ldots, K\}$ with initial distribution $\bvpi:\sQ \to [0,1]$ and transition probability matrix $\bvGamma = [\gamma_{i,j}]_{i,j \in \sX}$. Let $\bvY_1, \bvY_2, \ldots \bvY_T \in \R^d$ represent our observed data, assumed to satisfy the conditional independence structure $(\bvY_i \mid X_i) \indep (\bvY_j \mid X_j)$ for $i \neq j$.

\definecolor{lightblue}{RGB}{225, 232, 239}
\begin{figure}[ht]
  \centering
  \scalebox{0.85}{
    
\begin{tikzpicture}[shorten >=1pt,node distance=3cm,auto]
  \tikzstyle{state}=[shape=circle,thick,draw, minimum size = 1.5cm, fill=lightblue]
  \tikzstyle{obs}=[shape=circle,thick,draw, minimum size = 1.5cm]

  \node (Cold) {$\cdots$};
  \node[state, right of=Cold] (Ctm2) {$X_{t-2}$};
  \node[state, right of=Ctm2] (Ctm1) {$X_{t-1}$};
  \node[state, right of=Ctm1] (Ct) {$X_{t}$};
  \node[state, right of=Ct] (Ctp1) {$X_{t+1}$};
  \node[state, right of=Ctp1] (Ctp2) {$X_{t+2}$};
  \node[right of=Ctp2] (Cnew) {$\cdots$};

  \node[obs, below of=Ctm2] (Xtm2) {$\bvY_{t-2}$};
  \node[obs, below of=Ctm1] (Xtm1) {$\bvY_{t-1}$};
  \node[obs, below of=Ct] (Xt) {$\bvY_{t}$};
  \node[obs, below of=Ctp1] (Xtp1) {$\bvY_{t+1}$};
  \node[obs, below of=Ctp2] (Xtp2) {$\bvY_{t+2}$};

  \path[->, draw, thick, -latex] (Cold) -- (Ctm2);
  \path[->, draw, thick, -latex] (Ctm2) -- (Ctm1);
  \path[->, draw, thick, -latex] (Ctm1) -- (Ct);
  \path[->, draw, thick, -latex] (Ct) -- (Ctp1);
  \path[->, draw, thick, -latex] (Ctp1) -- (Ctp2);
  \path[->, draw, thick, -latex] (Ctp2) -- (Cnew);

  \path[->, draw, thick, -latex] (Ctm2) -- (Xtm2);
  \path[->, draw, thick, -latex] (Ctm1) -- (Xtm1);
  \path[->, draw, thick, -latex] (Ct) -- (Xt);
  \path[->, draw, thick, -latex] (Ctp1) -- (Xtp1);
  \path[->, draw, thick, -latex] (Ctp2) -- (Xtp2);
\end{tikzpicture}  } \caption{Standard HMM dependence structure} \label{fig:basicHMM}
\end{figure}
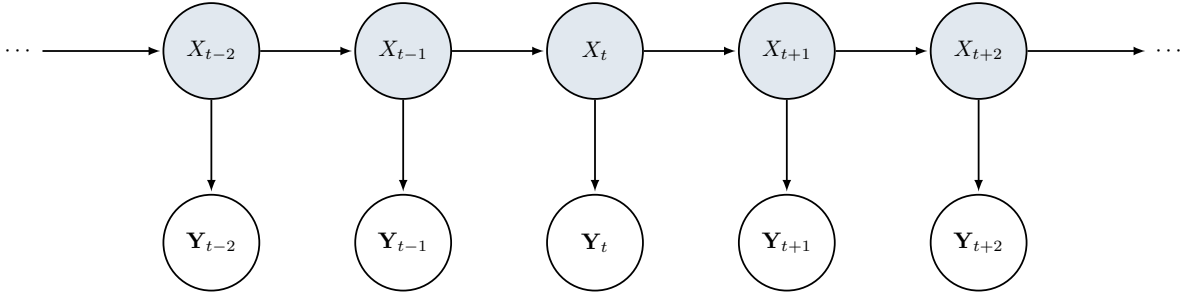

We assume that for each state $k \in \sX$, there exists a multivariate distribution $H_k(\cdot):\R^d \to [0,1]$ defined by $d$ univariate {\it $k$'th state-dependent marginal distributions}, $F_{k,1}(\cdot \,; \lambda_{k,1}), \ldots, F_{k,d}(\cdot \, ; \lambda_{k,d}): \R \to [0,1]$ and a $d$-dimensional parametric copula $C_k: [0,1]^d \to [0,1]$, such that
\begin{equation}\label{eq:state-dependentDF}
  \bvY_i \mid (X_i = k) \sim H_k(\cdot) = C_k\Big( F_{k,1}(\cdot \, ; \lambda_{k,1}), \ldots, F_{k,d}(\cdot \, ; \lambda_{k,d}) \, \Big| \, \theta_k\Big).
\end{equation}
This is the formulation of a generic finite-state time-homogeneous HMM (see Figure \ref{fig:basicHMM}). We further assume that each $C_k$ and $F_{k,h}$ admit densities $c_k$ and $f_{k,h}$ with respect to Lebesgue measure (on $\R^d$ and $\R$, respectively), implying the existence of the $k$'th state-dependent joint density
\[
    h_k(\bvy) 
    = c_k(F_{k,1}(y_1;\lambda_{k,1}),\ldots,F_{k,d}(y_d;\lambda_{k,1}) \mid \theta_k) \cdot \prod_{h=1}^d f_{k,h}(y_h;\lambda_{k,h}).
\]
Each $\lambda_{k,h} \in \R$ is a state- and margin-specific parameter, whose underlying parameter space depends on both the state $k$ and the component $h$. Thus, each state $k \in \sX$ generates a $d$-dimensional observation whose joint dependence structure is governed by a particular state-specific copula $C_k$ with associated parameter $\theta_k$. Note that while we assume that the $\lambda_{k,h}$'s and $\theta_k$'s are scalars for ease of presentation, our theory easily generalizes to multivariate parameters.

We further assume that each state-dependent copula $C_k$ is a member of a parametric family whose log-densities (and derivatives thereof) can be computed. The Gaussian copula is typically used to impose the same kind of dependence structure as the multivariate Gaussian distribution, while the $t$ copula is more suitable for capturing dependent extreme values \citep{demarta2005t}. \emph{Archimedean} copulas \citep{nelsen2007introduction} are also popularly used in statistical modelling. Each Archimedean copula can be explicitly represented as $C_\theta(\bvu) = \psi_\theta[ \psiinv(u_1) + \cdots + \psiinv(u_d)]$ for $\bvu \in [0,1]^d$, where the \emph{generator} $\psi_\theta:[0,\infty] \to [0,1]$ is a continuous decreasing function that satisfies $\psi_\theta(0) = 1$ and $\lim_{t \to \infty} \psi_\theta(t) = 0$, which is strictly decreasing on its support \citep{hofert2012likelihood}. The explicit functional representation of Archimedean copulas does not necessarily imply the analytic tractability of the usual quantities used for likelihood inference, but \cite{hofert2012likelihood} have derived explicit functional forms for the densities, generator derivatives, and score functions in the five ``classical'' one-parameter Archimedean copula families.

\subsection{Coupling Benefits: an Illustration} \label{sub:numericalillustration}

HMMs are generally used for two different purposes: to model a data-generating mechanism in which the Markov process may serve as a proxy for some process of interest, and to establish a classification of the hidden states. In applications where multivariate data regularly switch between different complex dependence structures and hierarchies, we can model the data-generating process as a finite-state HMM in which the state-dependent marginal distributions are linked together via copulas. The copula-within-HMM framework is highly flexible, because one has the freedom to vary both the copula itself as well as the marginal distributions between states (although as with HMMs in general, problems may arise if the state-dependent marginal distributions within each dimensions do not remain within the same parametric family). Such models have already been applied to financial data under the label of \emph{regime-switching copulas}; for example, \cite{nasri2020goodness} have used them for option pricing and have developed a goodness-of-fit approach to selecting the number of hidden states. In these applications, where the focus is on the data generating mechanism, the HMM is assessed not by its capacity to predict the states accurately, but rather via metrics that assess goodness-of-fit and the model's capacity to replicate key features of the data. 

A copula-within-HMM framework can also be used for the second aim, in which case the model would be assessed by its capacity to predict the hidden states. In this situation, a simple example shows that an  HMM performs poorly when assuming contemporaneous conditional independence of the observation process, even when the marginal state-dependent distributions are correctly specified. Consider bivariate data arising from a 2-state finite mixture model of length $T = 100$ with equal persistence between states (which may be viewed as a special case of an HMM), where the state-dependent distributions are Frank copulas \citep{genest1987frank} with standard normal margins, and the state-dependent copula parameters are $\theta_1 = -\theta$ and $\theta_2 = \theta$ for some fixed value of $\theta \in (0,100)$, representing extreme negative dependence and extreme positive dependence, respectively, when $|\theta|$ is large. In \autoref{fig:FrankZO}, we have plotted the zero-one loss for this model (i.e., $\ell_{01} = \frac{1}{T}\sum_{t=1}^T \one{\hat{X}_t \neq X_t}$, where $\hat{X}_t$ is the predicted state at time $t$) based on local state decoding as a function of $|\theta|$, as well as the zero-one loss based on an incorrectly-specified model with independent marginals. The zero-one loss for the independence model is, of course, constant; on the other hand, the zero-one loss for the true model appears to decay exponentially in $|\theta|$. 

\begin{figure}[ht]
    \centering
    \includegraphics[scale=0.15]{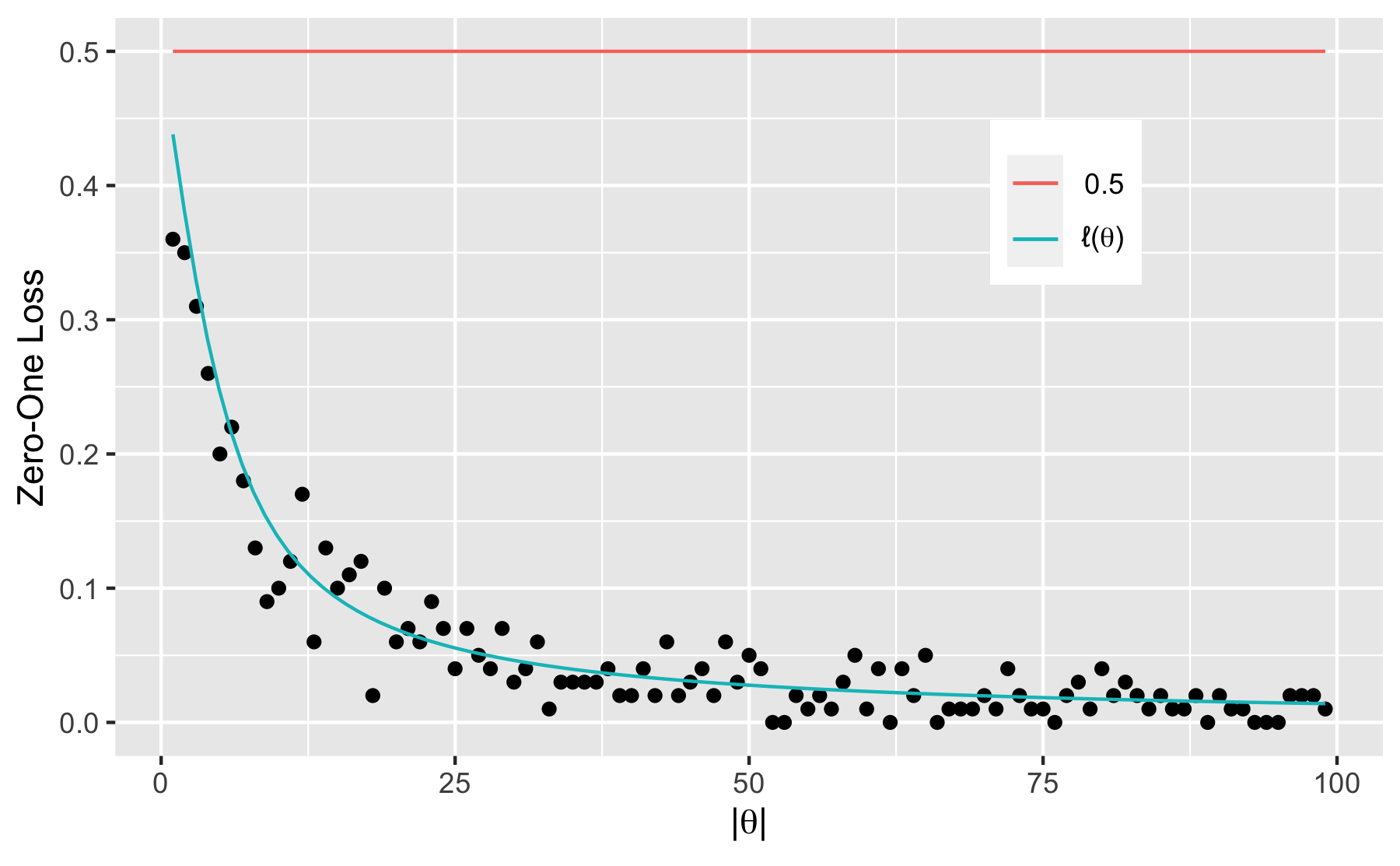}
    \caption{Empirical zero-one loss (black dots) for local decoding of a 2-state HMM with standard normal margins and Frank copulas with parameters $\theta$ and $-\theta$, for $|\theta| \in \{1, 2, \ldots, 100\}$. The red line is the zero-one loss when predictions are made assuming independent margins (which is constant in $\theta$); the blue line is the true expected zero-one loss at $\theta$.}
    \label{fig:FrankZO}
\end{figure}

In fact, the expected zero-one loss as a function of $\theta$ is given by $\ell_{01}(\theta) = \frac{1}{2} - \frac{2}{\theta}\llog{\cosh{\frac{\theta}{4}}}$. Such a measure can also be derived for other radially symmetric bivariate copula families within the same finite mixture model (see \autoref{cor:radsym01} within \autoref{app:proofs}); for example, under the same setup but with Gauss copulas in place of Frank copulas with state-dependent parameters $\rho_1 = -\rho$ and $\rho_2 = \rho \in (0,1)$, it can be shown that $\ell_{0,1}(\rho) = \mathrm{cos}^{-1}(\rho)/{\pi}$. These are, in fact, special cases of a far more general result:

\begin{theorem}\label{thm:ZOloss}
Let $\nu_{t,k} = \Prb{X_t = k} = [\bvpi \bvGamma^t]_k$. The expected zero-one loss of the classifications made by local decoding is given by
\begin{equation}\label{eq:ZOELapp}
\ell_{01}(\bveta) 
= \frac{1}{T}\sumtT \sumkK \nu_{t,k} \cdot \parPrb{\bveta}{\left.\frac{h_k(\bvY_t) \cdot \sum_{x_{-t} \in \sX^{T-1}} \kappa(x_{-t}, \bvY_{-t}) \cdot \gamma_{x_{t-1},k} \cdot \gamma_{k, x_{t+1}}}{\max_{j \neq k}\left\{ h_j(\bvY_t) \cdot \sum_{x_{-t} \in \sX^{T-1}} \kappa(x_{-t}, \bvY_{-t}) \cdot \gamma_{x_{t-1},j} \cdot \gamma_{j, x_{t+1}} \right\}} < 1 \right| X_t = k}
\end{equation}
where $\kappa(x_{-t}, \bvy_{-t}) = \pi_{x_1} \prod_{s \neq t} h_{x_s}(\bvy_s) \cdot \prod_{s \neq t,t+1} \gamma_{x_{s-1}, x_s}$. Moreover, if the HMM constitutes a finite mixture model, then \eqref{eq:ZOELapp} reduces to
\begin{equation}\label{eq:ZOELappFMM}
\ell_{01}(\bveta) 
= \frac{1}{T}\sumtT \sumkK \nu_{t,k} \int_{\R^d} \mathbbm{1}\left\{\frac{\nu_{t,k} \cdot h_k(\bvy_t)}{\max_{j \neq k} \nu_{t,j} \cdot h_j(\bvy_t)} < 1\right\} \dif H_k(\bvy_t).
\end{equation}
\end{theorem}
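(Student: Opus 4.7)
The plan is to unwind the definition of local decoding, express the event of misclassification as a ratio of posterior probabilities, and then exploit the HMM factorization to rewrite that ratio in the form displayed in the theorem.

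First I would recall that the local decoder is $\hat{X}_t = \arg\max_{k \in \sX} \Prb{X_t = k \mid \bvY\oneT}$, so by definition
\[
\{\hat{X}_t \neq k\} \;=\; \left\{ \frac{\Prb{X_t = k \mid \bvY\oneT}}{\max_{j \neq k} \Prb{X_t = j \mid \bvY\oneT}} < 1 \right\}
\]
(up to a tie set of Lebesgue measure zero, which is negligible since all marginals admit densities). Decomposing
\[
\ell_{01}(\bveta) = \frac{1}{T} \sumtT \Prb{\hat{X}_t \neq X_t} = \frac{1}{T} \sumtT \sumkK \nu_{t,k} \cdot \parPrb{\bveta}{\hat{X}_t \neq k \mid X_t = k}
\]
reduces the problem to rewriting the conditional probability in terms of the emission densities $h_k$ and the transition probabilities $\gamma_{i,j}$.

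Next I would expand the joint density using the HMM factorization. For $1 < t < T$, grouping the terms that depend on $x_t = k$ gives
\[
\Prb{X_t = k, \bvY\oneT = \bvy\oneT}
= h_k(\bvy_t) \sum_{x_{-t} \in \sX^{T-1}} \pi_{x_1} \prod_{s \neq t} h_{x_s}(\bvy_s) \prod_{s \neq t, t+1} \gamma_{x_{s-1}, x_s} \cdot \gamma_{x_{t-1}, k}\, \gamma_{k, x_{t+1}},
\]
and the bracketed prefactor is exactly $\kappa(x_{-t}, \bvy_{-t})$. Since the marginal density $f(\bvy\oneT)$ cancels from numerator and denominator of the posterior ratio, I obtain
\[
\frac{\Prb{X_t = k \mid \bvY\oneT}}{\Prb{X_t = j \mid \bvY\oneT}}
= \frac{h_k(\bvY_t) \sum_{x_{-t}} \kappa(x_{-t}, \bvY_{-t}) \gamma_{x_{t-1},k} \gamma_{k, x_{t+1}}}{h_j(\bvY_t) \sum_{x_{-t}} \kappa(x_{-t}, \bvY_{-t}) \gamma_{x_{t-1},j} \gamma_{j, x_{t+1}}},
\]
and substituting this into the event above yields \eqref{eq:ZOELapp}. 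The boundary times $t=1$ and $t=T$ are handled by the obvious convention that the missing transition factor is dropped from $\kappa$.

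For the finite mixture specialization, I would use the fact that all rows of $\bvGamma$ are equal to some vector $\bvpi'$, so $\gamma_{x_{t-1}, k} = \pi_k'$ and $\gamma_{k, x_{t+1}} = \pi_{x_{t+1}}'$ do not depend on $x_{t-1}$, while $\nu_{t,k} = \pi_k'$ for every $t$. The sums over $x_{-t}$ in the numerator and denominator then both equal the same state-independent quantity $\prod_{s \neq t} f(\bvY_s)$, reducing the ratio to $\nu_{t,k} h_k(\bvY_t) / [\nu_{t,j} h_j(\bvY_t)]$, which depends only on $\bvY_t$. Conditioning on $X_t = k$ gives $\bvY_t \sim h_k$, and the conditional probability becomes the integral in \eqref{eq:ZOELappFMM}. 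The main obstacle is essentially bookkeeping: keeping the index set $x_{-t}$ and the product $\prod_{s \neq t, t+1}$ aligned correctly when isolating the $x_t = k$ contribution, and verifying that the tie set $\{h_k(\bvY_t) \sum \kappa \gamma \gamma = \max_{j \neq k} h_j(\bvY_t)\sum \kappa \gamma \gamma\}$ has probability zero under the absolute continuity assumption so that strict inequality is equivalent to misclassification.
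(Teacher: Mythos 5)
Your proposal is correct and follows essentially the same route as the paper: the paper first proves a lemma computing $\parPrb{\bveta}{X_t = k \mid \bvY\oneT}$ via Bayes' rule and the HMM factorization (yielding exactly your $h_k(\bvy_t)\sum_{x_{-t}}\kappa(x_{-t},\bvy_{-t})\gamma_{x_{t-1},k}\gamma_{k,x_{t+1}}$ expression), then expresses the misclassification event as the posterior ratio being less than one, conditions on $X_t=k$ with weight $\nu_{t,k}$, and specializes to the mixture case by noting that the rows of $\bvGamma$ coincide so the sum over $x_{-t}$ becomes a constant independent of the state index. Your identification of that constant as $\prod_{s\neq t} f(\bvY_s)$ and your remarks on ties and boundary times match the paper's treatment.
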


The complicated form of the HMM posterior state probabilities makes it challenging to derive useful bounds for the zero-one loss of the most general models; however, under certain conditions on the underlying Markov chain, one can deduce upper bounds for individual components of the loss function. For example, if the Markov chain spends most of its time travelling through state $k$, then the contributions to the loss function provided by the $k$'th state can be upper bounded by a more elegant expression:

\begin{proposition}\label{prop:probbound}
Suppose that $\parPrb{\bveta}{X_t = k \mid X_{t-1}, X_{t+1}} \geq \parPrb{\bveta}{X_t = j \mid X_{t-1}, X_{t+1}}$ for all $j \in \sX$. Then
\begin{align*}
&\phantom{{}={}}\parPrb{\bveta}{\left.\frac{h_k(\bvY_t) \cdot \sum_{x_{-t} \in \sX^{T-1}} \kappa(x_{-t}, \bvY_{-t}) \cdot \gamma_{x_{t-1},k} \cdot \gamma_{k, x_{t+1}}}{\max_{j \neq k}\left\{ h_j(\bvY_t) \cdot \sum_{x_{-t} \in \sX^{T-1}} \kappa(x_{-t}, \bvY_{-t}) \cdot \gamma_{x_{t-1},j} \cdot \gamma_{j, x_{t+1}} \right\}} < 1 \right| X_t = k} \\
&\leq \int_{\R^d} \mathbbm{1}\left\{\frac{h_k(\bvy_t)}{\max_{j \neq k} h_j(\bvy_t)} < 1\right\} \dif H_k(\bvy_t).
\end{align*}
\end{proposition}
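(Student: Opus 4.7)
The plan is to recognize the numerator and denominator inside the probability as HMM joint densities, use the Markov conditional independence to reduce the ratio to one involving only $h_k/h_j$, and then conclude with monotonicity of probability.

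First, I would identify $h_j(\bvY_t) \cdot \sum_{x_{-t} \in \sX^{T-1}} \kappa(x_{-t}, \bvY_{-t}) \gamma_{x_{t-1},j}\gamma_{j,x_{t+1}}$ as precisely the joint density $p_{\bveta}(X_t = j, \bvY\oneT)$. Inspecting the definition, $\kappa$ is the full joint factorization of $(X\oneT,\bvY\oneT)$ evaluated at $X_t = j$ with the three factors $h_j(\bvY_t)$, $\gamma_{x_{t-1},j}$, and $\gamma_{j,x_{t+1}}$ stripped out; restoring them and summing over $x_{-t}$ reassembles the full joint density. Dividing numerator and denominator by the common factor $p_{\bveta}(\bvY\oneT)$ and clearing $h_k(\bvY_t)/h_j(\bvY_t)$, the event inside the probability becomes: there exists $j \neq k$ such that
\[
\frac{h_j(\bvY_t)}{h_k(\bvY_t)} > \frac{p_{\bveta}(X_t = k \mid \bvY_{-t})}{p_{\bveta}(X_t = j \mid \bvY_{-t})}.
\]

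Second, I would invoke the HMM conditional independence $X_t \indep \bvY_{-t} \mid (X_{t-1}, X_{t+1})$, a standard consequence of the graphical structure in \autoref{fig:basicHMM}. The tower rule then gives
\[
p_{\bveta}(X_t = k \mid \bvY_{-t}) = \EE{\bveta}{\parPrb{\bveta}{X_t = k \mid X_{t-1}, X_{t+1}} \mid \bvY_{-t}},
\]
and analogously for $j$. The hypothesis is a pointwise inequality between the inner conditional state probabilities that is preserved under conditional expectation, so $p_{\bveta}(X_t = k \mid \bvY_{-t}) \geq p_{\bveta}(X_t = j \mid \bvY_{-t})$, making the right-hand side of the displayed inequality at least $1$. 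Hence on the event in question there is some $j \neq k$ with $h_j(\bvY_t) > h_k(\bvY_t)$, i.e., $h_k(\bvY_t)/\max_{j \neq k} h_j(\bvY_t) < 1$.

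The event inside the probability is therefore contained in the event $\{h_k(\bvY_t) < \max_{j \neq k} h_j(\bvY_t)\}$, which depends only on $\bvY_t$. Since $\bvY_t \mid X_t = k \sim H_k$, the conditional probability of this larger event is exactly the integral on the right-hand side, and monotonicity of probability closes the argument. The main obstacle I expect is the first step: carefully matching every $\pi$, $\gamma$, and $h$ factor in $\kappa$ against the full joint density and invoking the emission-independence assumption to factor $p_{\bveta}(X_t = j, \bvY\oneT) = h_j(\bvY_t) \cdot p_{\bveta}(X_t = j, \bvY_{-t})$. Once that bookkeeping is made precise, the containment of events and the final monotonicity step are routine.
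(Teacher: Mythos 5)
Your proof is correct, and it reaches the same destination as the paper's --- the almost-sure containment of the decoding-error event in $\{h_k(\bvY_t) < \max_{j \neq k} h_j(\bvY_t)\}$, followed by monotonicity of (conditional) probability and the fact that $\bvY_t \mid (X_t = k) \sim H_k$ --- but the route through the middle is genuinely different. The paper stays at the level of the raw sums: it computes $\parPrb{\bveta}{X_t = k \mid X_{t-1} = x_{t-1}, X_{t+1} = x_{t+1}} = \gamma_{x_{t-1},k}\gamma_{k,x_{t+1}} / \parPrb{\bveta}{X_{t+1} = x_{t+1} \mid X_{t-1} = x_{t-1}}$ explicitly, converts the hypothesis into the pointwise transition-matrix inequality $\gamma_{x_{t-1},k}\gamma_{k,x_{t+1}} \geq \max_{j \neq k}\{\gamma_{x_{t-1},j}\gamma_{j,x_{t+1}}\}$ for all $x_{t-1}, x_{t+1}$, and then compares the two weighted sums term by term (together with $\sup_x f(x)\sup_x g(x) \geq \sup_x f(x)g(x)$). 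You instead recognize the sums as the joint densities $p_{\bveta}(X_t = j, \bvY\oneT)$, normalize to posterior odds, and derive the aggregated inequality $p_{\bveta}(X_t = k \mid \bvY_{-t}) \geq p_{\bveta}(X_t = j \mid \bvY_{-t})$ from the d-separation $X_t \indep \bvY_{-t} \mid (X_{t-1}, X_{t+1})$ plus the tower rule; the hypothesis then forces the "prior odds" factor to be at least one, so any violating $j$ must satisfy $h_j(\bvY_t) > h_k(\bvY_t)$. Your aggregated inequality is exactly the $\kappa$-weighted average of the paper's termwise one, so nothing is lost; what your version buys is a shorter, more conceptual argument that avoids the explicit bookkeeping with $\gamma$'s, at the cost of having to justify the conditional independence $X_t \indep \bvY_{-t} \mid (X_{t-1}, X_{t+1})$ (which does hold for this graph, as you note) and to handle the degenerate case $p_{\bveta}(X_t = j \mid \bvY_{-t}) = 0$ before dividing (harmless, since such a $j$ cannot realize the strict inequality). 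Both proofs use the same key computation of $\parPrb{\bveta}{X_t = k \mid X_{t-1}, X_{t+1}}$ to exploit the hypothesis; yours simply keeps it wrapped inside an expectation rather than unpacking it.
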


The effect of the copulas in the HMM can be analyzed more carefully through the lens of the integral in \autoref{prop:probbound}; we have the following result for the bivariate case that explicitly relates the integral to the strength of the dependence within any particular state. Below, we write $\tau_k$ and $\rho_k$ for Kendall's $\tau$ and Spearman's $\rho$ of the copula in the $k$'th state, respectively.

\begin{proposition}\label{prop:ZO}
Let $d=2$, and fix $k \in \sX$. Then, as either  $|\tau_k| \to 1$ or $|\rho_k| \to 1$ while the other state-dependent copulas stay fixed, we have \[\int_{\R^d} \mathbbm{1}\left\{\frac{\omega_{k} \cdot h_k(\bvy)}{\max_{j \neq k} \omega_{j} \cdot h_j(\bvy)} < 1\right\} \dif H_k(\bvy) \to 0\] 
for any $(\omega_1, \ldots, \omega_K) \in \sS^{K-1}$ with $\omega_k > 0$.
\end{proposition}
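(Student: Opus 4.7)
The plan is to show that as $|\tau_k|$ or $|\rho_k|$ approaches $1$, the joint distribution $H_k$ concentrates on a one-dimensional curve while the copula density $c_k$ blows up along that curve, so that $\omega_k h_k(\bvY)$ eventually dominates $\max_{j \ne k} \omega_j h_j(\bvY)$ with $H_k$-probability tending to one.

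First I would establish that $|\tau_k| \to 1$ (respectively $|\rho_k| \to 1$) forces $C_k$ to converge uniformly on $[0,1]^2$ to one of the Fr\'echet--Hoeffding bounds $M(u,v) = \min(u,v)$ or $W(u,v) = \max(u+v-1, 0)$. This is standard: the space of bivariate copulas is compact in the sup norm (copulas are $2$-Lipschitz), Spearman's $\rho_S(C) = 12 \int\!\!\int C \, du \, dv - 3$ is continuous in this topology, and both $\rho_S$ and $\tau$ attain the values $\pm 1$ only at $M$ and $W$. I will treat only the upper-bound case ($C_k \to M$); the lower-bound case follows by reflecting $V_2 \mapsto 1 - V_2$. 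In particular, for $(V_1, V_2) \sim C_k$, $|V_1 - V_2| \to 0$ in probability.

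The main step is the density blowup: $c_k(V_1, V_2) \to \infty$ in probability. The key estimate splits the sublevel set $\{c_k \le \Lambda\}$ according to whether the pair lies near the diagonal:
\[
    \P\!\left[c_k(V_1, V_2) \le \Lambda\right] = \int_{\{c_k \le \Lambda\}} c_k \, du \, dv \le \Lambda \cdot 2\epsilon + \P\!\left[|V_1 - V_2| > \epsilon\right],
\]
since the strip $\{|u - v| \le \epsilon\}$ has Lebesgue measure at most $2\epsilon$ while the integrand is bounded by $\Lambda$ on $\{c_k \le \Lambda\}$. Passing to the limit in the copula parameter first, then sending $\epsilon \downarrow 0$, yields $\P[c_k(V_1, V_2) \le \Lambda] \to 0$ for every fixed $\Lambda$.

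To conclude, write $R(y_1, y_2) = \max_{j \ne k} \omega_j h_j(y_1, y_2) / [\omega_k f_{k,1}(y_1) f_{k,2}(y_2)]$. Since $h_k(\bvY) = c_k(F_{k,1}(Y_1), F_{k,2}(Y_2)) f_{k,1}(Y_1) f_{k,2}(Y_2)$, the event in the proposition becomes $\{c_k(V_1, V_2) < R(\bvY)\}$, and for any $\Lambda > 0$,
\[
    \P\!\left[c_k(V_1, V_2) < R(\bvY)\right] \le \P\!\left[c_k(V_1, V_2) \le \Lambda\right] + \P\!\left[R(\bvY) > \Lambda\right].
\]
The first term vanishes by the blowup step; the second is controlled because $H_k$ converges weakly to the distribution of $(F_{k,1}^{-1}(U), F_{k,2}^{-1}(U))$ with $U \sim \mathrm{Unif}(0,1)$, under which $R$ is a.s.\ finite, so by the continuous mapping theorem $R(\bvY)$ is tight along the sequence. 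The main technical obstacle is this tightness: one must rule out pathological behavior of the competing densities $h_j$ along the Fr\'echet--Hoeffding curve in the $k$-th state's marginal scale. Under mild regularity (continuity and a.e.\ positivity of the $f_{k,h}$, together with local boundedness of each $h_j$), this is straightforward, and the proposition follows.
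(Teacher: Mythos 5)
Your proof is correct under the extra regularity you flag at the end, and it diverges from the paper's argument at the decisive step. Both proofs share the same skeleton: reduce to comparing $h_k$ against each competitor $h_j$ separately, pass to the copula scale, and control the mass that $C_k$ places off the near-diagonal strip $\{|u_1-u_2|\le\epsilon\}$ --- the paper via the exact identity $\mathbb{E}[(U_1-U_2)^2]=(1-\rho_k)/6$ and Markov's inequality, you via compactness and weak convergence to the Fr\'echet--Hoeffding bound. Where you part ways is on the strip itself. You prove that $c_k(V_1,V_2)\to\infty$ in probability (the estimate $\mathbb{P}[c_k\le\Lambda]\le 2\Lambda\epsilon+\mathbb{P}[|V_1-V_2|>\epsilon]$ is a nice, self-contained way to do this), and you then need tightness of the competing ratio $R(\mathbf{Y})$ along the sequence; since the limit law is singular and concentrated on the curve $\{(F_{k,1}^{-1}(u),F_{k,2}^{-1}(u))\}$, the a.e.-finiteness of the $h_j$ with respect to Lebesgue measure is not enough, and you must additionally assume continuity (or local boundedness) of the $h_j$ and continuity and positivity of the $f_{k,h}$ for the portmanteau/continuous-mapping step. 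The paper never establishes a density blow-up and never needs tightness: on the bad event $\{\bar{\omega}_j d_j > c_k\}$ it bounds the integrand $c_k$ by $\bar{\omega}_j d_j$ itself, and undoing the change of variables turns the strip integral into $\bar{\omega}_j\,\mu_j(B'_\delta)$ for the \emph{fixed} measure $\mu_j$ induced by $H_j$; absolute continuity of $H_j$ with respect to Lebesgue measure --- already part of the model --- then makes this term vanish as $\delta\to 0$. So the paper's swap-the-density trick buys the result under exactly the stated hypotheses, while your route buys a cleaner probabilistic picture ($c_k\to\infty$ in probability, plus tightness of the competitor) at the cost of mild extra smoothness assumptions on the state-dependent densities; substituting the paper's trick for your tightness step would remove that cost.
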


In effect, \autoref{thm:ZOloss}, in combination with \autoref{prop:probbound} and \autoref{prop:ZO}, says that as the copula in a particularly ``common'' state approaches either of the Fr\'{e}chet-Hoeffding bounds, the observations produced by that state will become sufficiently distinguished for the local decoding algorithm to detect the state with complete accuracy, thereby eliminating its contribution to the zero-one loss. Proofs of these results are provided in \autoref{app:proofs}.

\section{Parameter Estimation}\label{sec:parameterestimation}

\subsection{Pitfalls of the Vanilla EM Algorithm}
\label{pitfall}

We consider a parametric approach for the copula specification, in which the parameter vector characterizes the state-dependent copulas and the marginal distributions. Parameter estimation for HMMs usually relies on the Baum-Welch algorithm \citep{baum1970maximization}, which is a particular instance of the  expectation-maximization (EM) algorithm \citep{dempster1977maximum}, with the sequence of hidden states playing the role of missing data. For the model proposed here, the E-step calculations are straightforward when all state-dependent densities can be evaluated at the current parameter estimates; alas, the maximization required to complete the M-step is considerably more difficult because an exact solution is usually unavailable in closed form, and one must then resort to numerical optimization. The task requires the maximization of a nonlinear function of all state-dependent parameters. Copula densities are particularly challenging to optimize directly due to the algebraic complexities of their associated score functions. In the case of Archimedean copulas, for example,  a naive symbolic differentiation of the log-likelihood function produces a number of terms that increase rapidly with the size of each trajectory \citep{hofert2012likelihood}; this renders the maximization of all parameters of the objective function computationally expensive. To illustrate, we present here a succinct version of the relevant mathematical equations; complete derivations may be found in \cite{zucchini2017hidden}, for example. 

The set of parameters in our model consists of the initial probability mass distribution $\{\pi_i\}_{i \in \sX}$ and the transition probabilities $\{\gamma_{i,j}\}_{i,j \in \sX}$ of the latent Markov chain, the parameters of the state-dependent marginal densities $\{\lambda_{i,h}\}_{i \in \sX, h \in [d]}$, and the copula parameters $\{\theta_i\}_{i \in \sX}$. We write $\bveta \in \R^p$ for the complete collection of these unknown parameters arranged in a vector, and $\bveta_k$ for the sub-vector of parameters associated with state $k \in \sX$ specifically. Our formulation here of the basic Baum-Welch algorithm follows \cite{zucchini2017hidden}. The observed data at time $t$ is denoted $\bvy_t$. Note that while our presentation deals with a single trajectory $\{(\bvy_t, X_t)\}_t$, the algorithm can easily be extended to handle multiple trajectories $\{\{(\bvy_{t,1}, X_{t,1})\}_{t=1}^{T_1}, \ldots, \{(\bvy_{t,n}, X_{t,n})\}_{t=1}^{T_n} \}$, provided that they are independent and identically distributed (iid). 

The complete-data log-likelihood for one trajectory of the copula HMM is
given by
\begin{align}
    \ell_\text{com}\left(\bveta \mid \bvy\oneT, X\oneT\right) 
    &= \pi_{X_1} + \sum_{t=2}^T \log \gamma_{X_{t-1},X_{t}} \nonumber \\
    &+ \sum_{h=1}^d \log f_{X_{t},h}(y_{t,h}; \lambda_{X_{t},h}) \nonumber \\
    &+ \sum_{t=1}^T \log 
    c_{X_t}\left(F_{X_{t},1}(y_{t,1}; \lambda_{X_{t},1}),\ldots,F_{X_{t},1}(y_{t,d}; \lambda_{X_{t},d}) \mid \theta_{X_t}\right).
    \label{CDLL0}
\end{align}
We define the latent indicators  \begin{align*}
U_{k,t} &= \one{X_t = k}, \quad k \in \sX, \, t=1,\ldots, T\\
V_{j,k,t} &= \one{X_{t-1} = j, X_t = k }, \quad j,k \in \sX, \, t=2,\ldots, T,
\end{align*}
so that the complete data log-likelihood in \eqref{CDLL0} can be rewritten as
\begin{align}
    \ell_\text{com}\left(\bveta \mid \bvy\oneT, X\oneT\right) 
    &= \sum_{k=1}^K U_{k,1} \cdot \log{\pi_{k}} + \sum_{j=1}^K \sum_{k=1}^K \left(\sum_{t=2}^T V_{j,k,t} \right) \log{ \gamma_{j,k}} \nonumber \\ 
    &+ \sum_{k=1}^K \sum_{t=1}^T U_{k,t} \cdot \left( \sum_{h=1}^d \log{f_{k,h}(y_{t,h}; \lambda_{k,h})} \right)  \nonumber \\
    &+ \log{c_k\Big(F_{k,1}(y_{t,1}; \lambda_{k,1}), \ldots, F_{k,d}(y_{t,d}; \lambda_{k,d})\, \Big| \, \theta_k \Big)}. \label{eq:CDLL}
\end{align}
Now consider the E- and M-steps at the $(s+1)$th iteration of the algorithm, assuming that the algorithm is initialized at $\bveta^{(0)}$. The \textbf{E-step} requires the computation of the conditional expectation 
\begin{equation}
    Q(\bveta \mid \bveta^{(s)}) 
    = \mathbb{E}_{\bveta^{(s)}}\left[ \ell_\text{com}\left(\bveta \mid \bvy\oneT, X\oneT \right) \right],
\end{equation}
and to do so requires computing the conditional expectations of all $U_{k,t}$'s and $V_{j,k,t}$'s given the observed data $\bvy_{1:T}$ and current parameter estimates $\bveta^{(s)}$, which are given by
\begin{equation}\label{eq:uhat_def}
    \hat{u}_{j,t}^{(s+1)} = \parPrb{\bveta^{(s)}}{X_t = j \mid \bvy_{1:T}} 
    = \frac{\alpha_{j,t}(\bvy_{1:t}; \bveta^{(s)}) \cdot \beta_{j,t}(\bvy_{(t+1):T}; \bveta^{(s)})}{\sum_{l=1}^K \alpha_{l,t}(\bvy_{1:t}; \bveta^{(s)}) \cdot \beta_{l,t}(\bvy_{(t+1):T}; \bveta^{(s)})}
\end{equation}
and
\begin{equation}\label{eq:vhat_def}
    \hat{v}_{j,k,t}^{(s+1)} 
    = \parPrb{\bveta^{(s)}}{X_{t-1} = j, X_t = k \mid \bvy_{1:T}} 
    = \frac{\alpha_{j,t-1}(\bvy_{1:(t-1)}; \bveta^{(s)}) \cdot {\gamma}_{j,k}^{(s)} \cdot h_k(\bvy_t; \bveta^{(s)}) \cdot \beta_{k,t}(\bvy_{(t+1):T}; \bveta^{(s)})}{\sum_{l=1}^K \alpha_{l,t}(\bvy_{1:t}; \bveta^{(s)}) \cdot \beta_{l,t}(\bvy_{(t+1):T}; \bveta^{(s)})},
\end{equation}
where 
\[
     \alpha_{j,t}(\bvy_{1:t}; \bveta) = \parPrb{\bveta}{\bvY_{1:t} = \bvy_{1:t}, X_t = j}
     \quad \text{and} \quad 
    \beta_{j,t}(\bvy_{(t+1):T}; \bveta) = \parPrb{\bveta}{\bvY_{(t+1):T} = \bvy_{(t+1):T} \mid X_t = j}
\]
are known as the \emph{forward probabilities} and \emph{backward probabilities}, respectively. These ingredients are essential for HMM classification algorithms, and are well-known to admit recursive structures which allow for their efficient computation via dynamic programming. The latter fact is not altered by the addition of the copula density in \eqref{eq:CDLL}.

The \textbf{M-step} requires finding the maximizer $\bveta^{(s+1)}$ of
\begin{align}
    Q(\bveta|\bveta^{(s)}) 
    &= \sum_{k=1}^K \hat{u}_{k,1}^{(s)} \cdot \log{\pi_{k}} \nonumber \\
    &+ \sum_{j=1}^K \sum_{k=1}^K \left(\sum_{t=2}^T \hat{v}_{j,k,t}^{(s)} \right) \cdot \log{ \gamma_{j,k}}  \nonumber \\ 
    &+ \sum_{k=1}^K \sum_{t=1}^T \hat{u}_{k,t}^{(s)} \cdot \left( \log{c_k\Big(F_{k,1}(y_{t,1} \mid \lambda_{k,1}), \ldots, F_{k,d}(y_{t,d} \mid \lambda_{k,d})\, \Big| \, \theta_k \Big)} + \sum_{h=1}^d \log{f_{k,h}(y_{t,h} \mid \lambda_{k,h})} \right). \label{eq:MstepLL}
\end{align}

Since the parameters in the three sums of \eqref{eq:MstepLL} are functionally independent, they can be maximized independently. Using Lagrange multipliers, one can easily show that the maximizer for the initial distribution (as a vector in the standard simplex $\sS^{K-1}$) is
\[
    \bvpi^{(s+1)} 
    = \argmax{\bvpi \in \sS^{K-1}}\left( \sum_{k=1}^K \hat{u}_{k,1}^{(s)} \cdot \log{\pi_k} \right) 
    = (\hat{u}_{1,1}^{(s)}, \ldots, \hat{u}_{K,1}^{(s)}),
\]
while that for the vector of transition probabilities from state $j \in \sX$ is \[
    \bvgamma_{j,\cdot}^{(s+1)} 
    = \argmax{\bvgamma \in \sS^{K-1}} \left(\sum_{k=1}^K \left(\sum_{t=2}^T \hat{v}_{j,k,t}^{(s)} \right) \cdot \log{ \gamma_{k}}\right) 
    = \left(\frac{ \sum_{t=2}^T \hat{v}_{j,1,t}^{(s)}  }{\sum_{k=1}^K \sum_{t=2}^T \hat{v}_{j,k,t}^{(s)}}, \ldots,  \frac{ \sum_{t=2}^T \hat{v}_{j,K,t}^{(s)}  }{\sum_{k=1}^K \sum_{t=2}^T \hat{v}_{j,k,t}^{(s)}}   \right).
\]
The maximizer for the vector of parameters involved in the state-dependent distributions consists of the marginal parameters $\{\lambda_{i,h}\}_{i \in \sX, h \in [d]}$ and copula parameters $\{\theta_i\}_{i \in \sX}$ that jointly maximize 
\begin{equation}
    \sum_{k=1}^K \sum_{t=1}^T \hat{u}_{k,t}^{(s-1)} \cdot \left[ \log c_k \Big(F_{k,1}(y_{t,1}; \lambda_{k,1}), \ldots, F_{k,d}(y_{t,d}; \lambda_{k,d}) \, \Big| \, \theta_k\Big) + \sum_{h=1}^d \log f_{k,h}(y_{t,h} ; \lambda_{k,h}) \right].\label{eq:copulaLL}
\end{equation}
At this point, the M-step falters because the optimization of \eqref{eq:copulaLL} is difficult; when the dimension $d$ is even moderately high, naively applying a numerical optimizer to \eqref{eq:copulaLL} directly is likely to fail, for several reasons. For one, initialization can be challenging; since any copula is grounded, even a \emph{correct} initialization $\lambda_{k,h}^{(0)} = \lambda_{k,h}$ can led us astray, for we can easily have $F_{k,h}(y_{t,h}; \lambda_{k,h}) \approx 0$ when $X_t \neq k$, and hence $c_{k}(\cdots, F_{k,h}(y_{t,h}; \lambda_{k,h}),\cdots) \approx 0$. Unless $\hat{u}_{k,t}^{(0)} \approx 0$ as well --- which would be unusual when we have no \emph{a priori} information about the true underlying states --- evaluation of \eqref{eq:copulaLL} will immediately cause a numerical overflow. We emphasize that this can occur even if all parameters are initialized to their true values.

A second challenge comes from the fact that the log-likelihood surface can have wide regions of flatness; when Hessian-based methods (or quasi-Newton methods) are used for optimization, this can cause instability. For example, we consider a simple version of the model \eqref{eq:state-dependentDF} in which only the copula parameter contains information about the hidden state $\bvY_t \mid (X_{t} = k) \sim C_\mathrm{Frank}\left(\sN(\mu, 1), \sN(\mu, 1) \mid \theta_k\right)$ for $k=1,2$. \autoref{fig:LL_HMMsurfaces} shows three log-likelihood surfaces, anchored by three values of $\mu$. One can see that for moderate values of $\theta_1$ and $\theta_2$, the log-likelihood surface is quite flat.

\begin{figure}[ht]
    \centering
    \includegraphics[width=\textwidth]{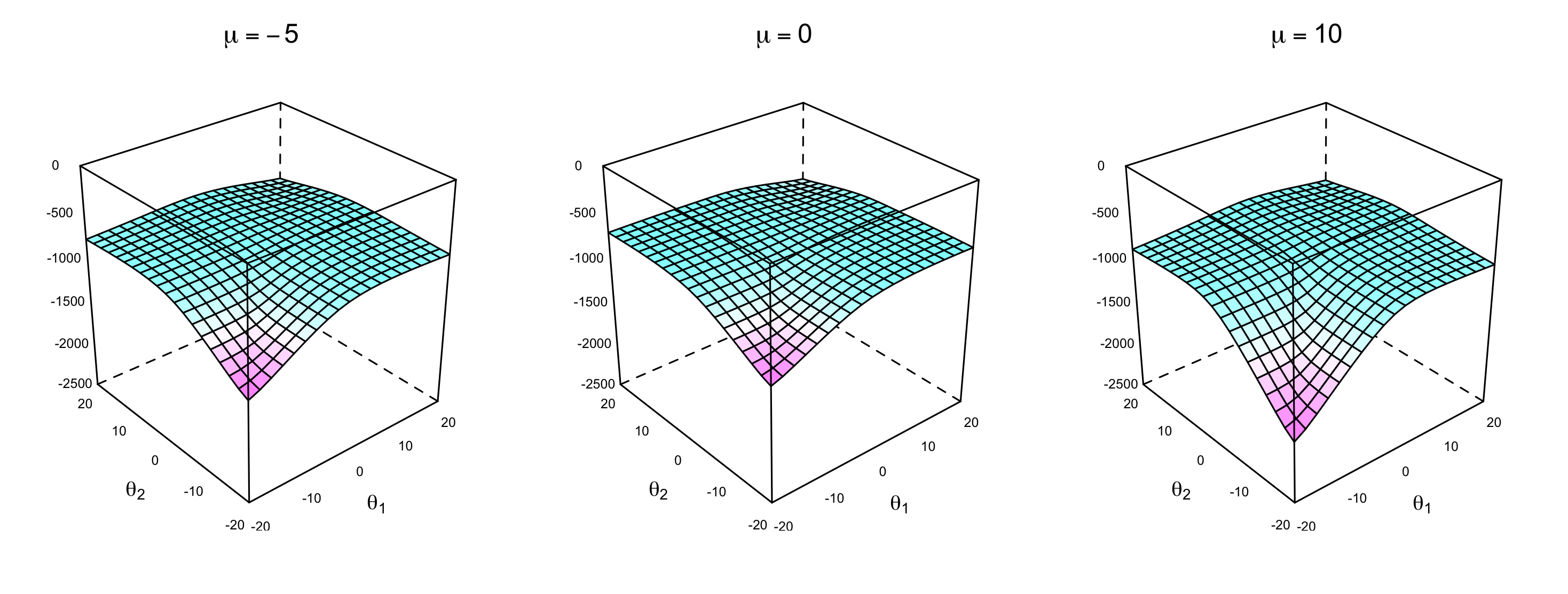}
    \caption{Log-likelihood surfaces based on the HMM $\bvY_t \mid (X_{t} = k) \sim C_\mathrm{Frank}\left(\sN(\mu, 1), \sN(\mu, 1) \mid \theta_k\right)$ for $k=1,2$, with true parameters $\mu = 0$, and $\theta_1 = -\theta_2 =10$ and all initial probabilities and transition probabilities equal to $1/2$.}
    \label{fig:LL_HMMsurfaces}
\end{figure}

\subsection{An Improved Algorithm}\label{sub:improvedlalgo}

In contrast to jointly maximizing all parameters in \eqref{eq:copulaLL}, it is considerably less challenging (if not necessarily easy) to estimate the copula parameters $\{\theta_i\}_{i \in \sX}$ and the marginal parameters $\{\lambda_{i,h}\}_{i \in \sX, h \in [d]}$ \emph{separately}. For instance, likelihood inference for the five classical one-parameter Archimedean families under \emph{known} marginals was studied by \cite{hofert2012likelihood}, who derived concise functional representations for the copula densities and their score functions. Altogether, these allow for maximum likelihood estimation using gradient-based optimization methods. Meanwhile, the difficulty of performing likelihood inference for the marginal distributions themselves depends highly on their specific forms (as for the standard case of HMMs with univariate state-dependent distributions). While closed-form solutions --- or efficient estimation procedures --- for MLEs in the presence of iid data have been long-established for most commonly encountered distributions, the situation is different for the models we introduce here, where the function to be optimized is essentially a weighted sum of log-densities, and closed-form expressions for the maximizing values are rarely encountered. 

If the HMM structure were removed from the model so that the data were assumed to consist of iid observations from $C(F_{1}(\cdot \,; \lambda_{1}), \ldots, F_{d}(\cdot \,; \lambda_d) \big| \theta)$, then \eqref{eq:copulaLL} would reduce to 
\begin{equation}\label{eq:LLonestate}
    \sum_{t=1}^T\left \{  \log{c\Big(F_{1}(y_{t,1} \,; \lambda_1), \ldots, F_{d}(y_{t,d} \,; \lambda_d) \, \Big| \, \theta\Big)} + \log{f_{h}(y_{t,h} ; \lambda_{h})}\right \} .
\end{equation}
Moreover, if the copula were known, an estimate for the marginal parameter $\lambda_h$ could be inferred from the marginal data $y_{1:T,h}$ alone as the standard MLE
\begin{equation}\label{eq:LLonemarginal}
    \tilde{\lambda}_h = \argsup_{\lambda} \, \sum_{t=1}^T \log{f_{h}(y_{t,h}; \lambda)}.
\end{equation}
This is exactly what is done in the first step of the \emph{inference functions for margins (IFM)} method of \cite{joe1996estimation}, which is itself grounded on the theory of inference functions  \citep{mcleish2012theory}. The IFM method starts by finding the $\tilde{\lambda}_h$'s that separately maximize the marginal log-likelihoods \eqref{eq:LLonemarginal} for $h=1, \ldots, d$, and then proceeds by finding the $\tilde{\theta}$ that maximizes the joint log-likelihood \eqref{eq:LLonestate} in which the marginal parameters are set to their previous estimates: \begin{equation}\tilde{\theta} = \argsup_{\theta} \, \sum_{t=1}^T \log{c\Big(F_{1}(y_{t,1} \,; \tilde{\lambda}_1), \ldots, F_{d}(y_{t,d} \,;  \tilde{\lambda}_d) \, \Big| \, \theta\Big)}.\label{eq:LLonecopula}\end{equation}

Section 10.1 of \cite{joe1997multivariate} shows that under standard regularity conditions, $(\tilde{\lambda}_1, \ldots, \tilde{\lambda}_d, \tilde{\theta})$ is a consistent and asymptotically normal estimator of $(\lambda_1, \ldots, \lambda_d, \theta)$, although it is not as efficient as the usual maximum likelihood estimator. 

We propose here a new algorithm, which we call the \emph{expectation-IFM (EIFM) algorithm}, that alleviates the computational difficulties associated with the estimation of $\bveta$. The main idea is to embed the IFM method within the maximization step of the basic EM algorithm. Specifically, while the E-step remains unchanged, in the M-step we replace the single-state log-likelihoods in  \eqref{eq:LLonemarginal} and \eqref{eq:LLonecopula} with weighted averages that are computed over all $K$ states. The log-likelihood for the $k$'th state is weighted by the corresponding state membership probability estimates $\{\hat{u}_{k,t}\}$. Starting with an initial guess $\bveta^{(0)}$, we perform the following steps, iterating over $s \geq 1$ until we reach convergence:

\begin{enumerate}[\bfseries Step 1:]
\item (E-Step) Calculate the conditional expectations:
\begin{enumerate}[\bfseries {Step 1}a:]
    \item Calculate 
    \[
        \hat{u}^{(s+1)}_{j,t} 
        = \frac{\alpha_{j,t}(\bvy_{1:t}; \bveta^{(s)}) \cdot \beta_{j,t}(\bvy_{(t+1):T}; \bveta^{(s)})}{\sum_{l=1}^K \alpha_{l,t}(\bvy_{1:t}; \bveta^{(s)}) \cdot \beta_{l,t}(\bvy_{(t+1):T}; \bveta^{(s)})}
    \]
    for $j \in \sX$ and $t \in \{1, \ldots, T\}$.
    \item Calculate 
    \[
        \hat{v}^{(s+1)}_{j,k,t} 
        = \frac{\alpha_{j,t-1}(\bvy_{1:(t-1)}; \bveta^{(s)}) \cdot {\gamma}_{j,k}^{(s)} \cdot h_k(\bvy_t; \bveta^{(s)}) \cdot \beta_{k,t}(\bvy_{(t+1):T}; \bveta^{(s)})}{\sum_{l=1}^K \alpha_{l,t}(\bvy_{1:t}; \bveta^{(s)}) \cdot \beta_{l,t}(\bvy_{(t+1):T}; \bveta^{(s)})}
    \]
    for $j,k \in \sX$ and $t \in \{2, \ldots, T\}$.
\end{enumerate}

\item (IFM-Step) Estimate model parameters using an IFM approach:
\begin{enumerate}[\bfseries {Step 2}a:]
    \item Estimate the initial distribution and transition probabilities:
    \[
        \bvpi^{(s+1)} = (\hat{u}_{1,1}^{(s)}, \ldots, \hat{u}_{K,1}^{(s)})
    \]
    and
    \[
        \bvgamma_{j,\cdot}^{(s+1)} 
        = \left(\frac{ \sum_{t=2}^T \hat{v}_{j,1,t}^{(s)}  }{\sum_{k=1}^K \sum_{t=2}^T \hat{v}_{j,k,t}^{(s)}}, \ldots,  \frac{ \sum_{t=2}^T \hat{v}_{j,K,t}^{(s)}  }{\sum_{k=1}^K \sum_{t=2}^T \hat{v}_{j,k,t}^{(s)}}   \right)
    \]
    for $j \in \sX$.
    \item Estimate the marginal parameters 
    \[
        \lambda_{k,h}^{(s+1)} 
        = \argsup_{\lambda} \sum_{t=1}^T \hat{u}_{k,t}^{(s+1)} \cdot \log{f_{k,h}(y_{t,h}; \lambda)}
    \]
    for $k \in \{1, \ldots, K\}$ and $h \in \{1, \ldots, d\}$.
    \item Estimate the copula parameters 
    \begin{equation}\label{eq:copulaargsup}
        \tilde{\theta}_k^{(s+1)} 
        = \argsup_{\theta} \sum_{t=1}^T \hat{u}_{k,t}^{(s+1)} \cdot \log{c_k\Big(F_{k,1}(y_{t,1}; \lambda_{k,1}^{(s+1)}), \ldots, F_{k,d}(y_{t,d};  \lambda_{k,d}^{(s+1)}) \, \Big| \, \theta\Big)}
    \end{equation}
    for $k \in \{1, \ldots, K\}$.
\end{enumerate}
\end{enumerate}

Step 2b is  an importance-weighted maximum likelihood estimation; it is typically not more difficult than maximum likelihood estimation in the unweighted case, and the optimal parameter values are sometimes available in closed form. For example, if the marginal distribution is in an exponential family with vector parameter $\boldsymbol{\lambda} \in \Theta \subseteq \R^q$, that is,
\[
    f_{k,h}(y_{t,h}; \boldsymbol{\lambda}) 
    = g(y_{t,h}) \cdot \exp{\bveta(\boldsymbol{\lambda})^\top  \bvT(y_{t,h}) - A(\bveta)},
\]
it is straightforward to show that Step 2b amounts to solving the system
\begin{equation}\label{eq:expfamily}
    \bvJ_{\bveta}(\boldsymbol{\lambda})^\top \nabla A(\boldsymbol{\lambda}) 
    = \frac{1}{\sum_{t=1}^T \hat{u}^{(s)}_{k,t}} \sum_{t=1}^T \hat{u}^{(s)}_{k,t} \bvJ_{\bveta}(\boldsymbol{\lambda})^\top \bvT(y_{t,h})
    = \bvzero
\end{equation}
in $\boldsymbol{\lambda}$. Here $\bvJ_{\bveta}(\boldsymbol{\lambda})$ is the Jacobian of $\bveta(\boldsymbol{\lambda})$, which is simply the identity matrix when the exponential family is in canonical form.

The difficulty of Step 2c depends on the ease of evaluating the copula density $c_k$ and its derivative with respect to $\theta$. For the five ``classical'' one-parameter Archimedean copula families investigated by \cite{hofert2012likelihood}, the functional forms for the densities, generator derivatives, and score functions can be computed using  the \texttt{copula} package \citep{Rcopula} within \texttt{R}. These functions can be used in conjunction with the built-in \texttt{optim} function or an appropriate numerical optimization library to approximate \eqref{eq:copulaargsup} in a straightforward manner.

\section{Theoretical Results}\label{sec:TheoreticalResults}

In this section, we demonstrate that the EIFM algorithm does not belong to the class of generalized EM algorithms \citep{dempster1977maximum}, and therefore a theoretical analysis is needed to justify its use. We start off with a purely algorithmic perspective in \autoref{sub:algorithm}, describing the circumstances under which the sequence of estimates produced by the EIFM algorithm will converge. In \autoref{sub:estimator}, we show that under relatively weak assumptions, the estimator produced by the algorithm upon convergence will have desirable statistical properties.

\subsection{Analysis of the Algorithm}\label{sub:algorithm}

In order to justify the convergence of the EIFM algorithm, we temporarily set aside its statistical content and view it as a deterministic nonlinear optimization procedure. The algorithm is not an EM algorithm in the traditional sense, because the expectation of the complete-data log-likelihood is not maximized at each iteration. This is mainly a consequence of the fact that
\begin{equation}
    \sum_{t=1}^T \hat{u}_{t} \cdot \log{f_{h}(y_{t,h}; \lambda_{h}^{(s)})} 
    \leq \sum_{t=1}^T \hat{u}_{t} \cdot \log{f_{h}(y_{t,h}; \lambda_{h}^{(s+1)})}, \qquad h \in \{1, \ldots, d\} 
\label{eq:MLEcounterexample}\end{equation} 
does not imply 
\[
    \sum_{t=1}^T \hat{u}_{t} \cdot \log{c\Big(F_{1}(y_{t,1}; \lambda^{(s)}_{1}), \ldots, F_{d}(y_{t,d}; \lambda^{(s)}_{d})\, \Big| \, \theta^{(s)} \Big)} 
    \leq \sum_{t=1}^T \hat{u}_{t} \cdot \log{c\Big(F_{1}(y_{t,1}; \lambda^{(s+1)}_{1}), \ldots, F_{d}(y_{t,d}; \lambda^{(s+1)}_{d})\, \Big| \, \theta^{(s)} \Big)}.
\]
For a simple counterexample, one can take a single observation (i.e., $T=1$) with $Y_1 \sim \mathrm{Exp}(\lambda_1)$ and $Y_2 \sim \mathrm{Exp}(\lambda_2)$ such that $(F_1(Y_1),F_2(Y_2))$ has a Farlie–Gumbel–Morgenstern copula with density $c(u,v \mid \theta) = 1 + \theta(2u-1)(2v-1)$. Suppose the algorithm were initialized at $\lambda_1^{(0)} = \lambda_2^{(0)} = 1$ and any $\theta^{(0)} \in (0,1)$. Step 2b of the algorithm produces the standard univariate MLEs $\lambda_1^{(1)} = y_1^{-1}$ and $\lambda_2^{(1)} = y_2^{-1}$, which satisfy \eqref{eq:MLEcounterexample} by construction; however, the inequality
\[
    \log{c\left(F_1(y_1; \lambda_1^{(0)}), F(y_2;\lambda_2^{(0)}) \mid \theta^{(0)}\right)} 
    \leq  \log{c\left(F_1(y_1; \lambda_1^{(1)}), F(y_2;\lambda_2^{(1)}) \mid \theta^{(0)}\right)}
\]
is equivalent to 
\[
    (2e^{-y_1} - 1)(2e^{-y_2} - 1) \leq (2e^{-1} - 1)^2,
\]
which immediately fails when $y_1,y_2 > 1$. 

The above counterexample shows that despite sharing a similar strategy with the ECM algorithm \citep{meng1993maximum}, the EIFM algorithm cannot be justified in the same way as any \emph{generalized EM algorithm} \citep{dempster1977maximum}, as these all rely on increasing the objective function at each iteration. Clearly, one needs to identify the conditions under which the EIFM algorithm will converge, since its  sequential updating rule does not offer convergence guarantees  without stronger assumptions. Our analysis here follows roughly the one given for the \emph{ES algorithm} by \cite{elashoff2004algorithm}, as the aim of that procedure is also to iteratively solve unbiased estimating equations in the presence of missing data. However, the dependence inherent in the HMM data precludes a direct application of the ES algorithm.

We regard the $\hat{u}_{j,t}$'s and $\hat{v}_{j,k,t}$'s as parameters themselves (in a non-statistical context), and collect them and the parameter of interest $\bveta$ into a larger parameter vector $\bvxi = (\hat{\bvu}, \hat{\bvv}, \bveta)$ of length $N := TK + (T-1)K^2 + K + K^2 + dK + K$. For accounting purposes, let
\begin{align*}
    N_1 &= 0\\
    N_2 &= N_1 + TK\\
    N_3 &= N_2 + (T-1)K^2\\
    N_4 &= N_3 + K\\
    N_5 &= N_4 + K^2\\
    N_6 &= N_5 + dK.
\end{align*}
We construct a vector-valued function $\bvg(\bvxi) = \left(g_1(\bvxi), \ldots, g_N(\bvxi)\right)^\top$ with each $g_i:\R^N \to \R$ defined as follows: 
\begin{align*}
    g_{N_1 + jt}(\bvxi) &= \hat{u}_{j,t} 
        - \frac{\alpha_{j,t}(\bvy_{1:t}; \bveta) \cdot \beta_{j,t}(\bvy_{(t+1):T}; \bveta)}{\sum_{l=1}^K \alpha_{l,t}(\bvy_{1:t}; \bveta) \cdot \beta_{l,t}(\bvy_{(t+1):T}; \bveta)}, &&j \in \sX, \, \, t \in \{1, \ldots, T\} \\ 
    g_{N_2 + jkt}(\bvxi) &= \hat{v}_{j,k,t} 
        - \frac{\alpha_{j,t-1}(\bvy_{1:(t-1)}; \bveta) \cdot {\gamma}_{j,k} \cdot h_k(\bvy_t; \bveta) \cdot \beta_{k,t}(\bvy_{(t+1):T}; \bveta)}{\sum_{l=1}^K \alpha_{l,t}(\bvy_{1:t}; \bveta) \cdot \beta_{l,t}(\bvy_{(t+1):T}; \bveta)}, &&  j,k \in \sX, \, \, t \in \{2, \ldots, T\}\\
    g_{N_3 + j}(\bvxi) &= \pi_j - \hat{u}_{j,1}, && j \in \sX \\
    g_{N_4 + jk}(\bvxi) &= \gamma_{j,k} \cdot \sum_{l=1}^K\sumtT \hat{v}_{j,l,t} - \sumtT \hat{v}_{j,k,t}, && j,k \in \sX \\
    g_{N_5 + jh}(\bvxi) &= \sumtT \hat{u}_{j,t} \cdot \frac{\partial}{\partial \lambda} \log{f_{j,h}(y_{t,h}; \lambda)}, && j \in \sX, \, h \in \{1, \ldots, d\} \\
    g_{N_6 + j}(\bvxi) &= \sumtT \hat{u}_{j,t} \cdot \frac{\partial}{\partial \theta} \log{c_j\left(F_{j,1}(y_{t,1}; \lambda_{j,1}), \ldots, F_{j,d}(y_{t,d}; \lambda_{j,d}) \mid \theta\right)}, && j \in \sX .
\end{align*}
From the description of the EIFM algorithm in \autoref{sub:improvedlalgo}, it can be seen that the entire $(s+1)$'th iteration of the algorithm exactly corresponds to updating $\bvxi^{(s)} \to \bvxi^{(s+1)}$ by setting $\xi_i^{(s+1)}$ as the solution to the univariate problem 
\begin{equation}\label{eq:GSupdate}
    g_i(\xi^{(s+1)}_1, \ldots, \xi_{i-1}^{(s+1)}, \xi, \xi^{(s)}_{i+1},\ldots, \xi^{(s)}_N) = 0,
\end{equation}
for each $i=1,\ldots,N$ (in practice, many of these sub-updates are performed in parallel --- for example, the sub-updates $N_4 + 1$ to $N_5$ altogether correspond to Step 2a of the algorithm). Supposing the sequence $\{\bvxi^{(s)}\}_{s \geq 1}$ converges to some $\bvxi^* \in \R^N$, the limiting vector will satisfy $\bvg(\bvxi^*) = \bvzero$, and the sub-vector $\bveta^*$ will be taken as our estimator of $\bveta$, whose statistical properties are studied in \autoref{sub:estimator}. The EIFM algorithm is thus an example of a \emph{nonlinear Gauss-Seidel method}, or more generally, a \emph{nonlinear successive over-relaxation (SOR) method} \citep{ortega2000iterative}. The local convergence of such algorithms depend on the behaviour of the Jacobian $\bvJ_{\bvg}(\bvxi)$ of $\bvg$ in a neighborhood of some solution to $\bvg(\bvxi) = \bvzero$ such that a certain transformation $\tilde{\bvg}:\R^N \to \R^N$ of $\bvg$ is (locally) required to be a contraction mapping with $\bvxi^*$ as a fixed point. The critical theorem is the following:

\begin{theorem}[\cite{ortega2000iterative}, Theorem 10.3.5]\label{thm:SORconvergence}
Let $\bvg:D \subset \R^N \to \R^N$ be continuously differentiable in an open neighborhood $S_0$ of $\bvxi^*$ such that $\bvg(\bvxi^*) = \bvzero$. Decompose the Jacobian into diagonal, strictly lower, and strictly upper matrices as $\bvJ_{\bvg}(\bvxi) = \bvD(\bvxi) - \bvL(\bvxi) - \bvU(\bvxi)$, and suppose that $\bvD(\bvxi^*)$ is nonsingular. If $\rho([\bvD(\bvxi^*) - \bvL(\bvxi^*)]^{-1} \bvU(\bvxi^*)) < 1$, then there exists an open ball $B(\bvxi^*, \delta) \subset S_0$ such that for any $\bvxi^{(0)} \in B(\bvxi^*, \delta)$, there exists a unique sequence $\{\bvxi^{(s)}\} \subset B(\bvxi^*, \delta)$ satisfying the nonlinear SOR prescription, such that $\lim_{s \to \infty} \bvxi^{(s)} = \bvxi^*$.
\end{theorem}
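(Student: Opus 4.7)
The plan is to recast the nonlinear SOR iteration as a genuine fixed-point iteration $\bvxi^{(s+1)} = \bvPhi(\bvxi^{(s)})$ for an implicitly-defined map $\bvPhi$, verify that $\bvxi^*$ is a fixed point, compute $\bvJ_{\bvPhi}(\bvxi^*)$ explicitly, and then invoke the standard Ostrowski fixed-point theorem, which states that if an iteration map has a fixed point at which the spectral radius of its Jacobian is strictly less than one, then the iteration converges locally to that fixed point.

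First I would construct $\bvPhi$ coordinate-by-coordinate via the implicit function theorem. Because $\bvD(\bvxi^*)$ is nonsingular, every diagonal entry $\partial g_i / \partial \xi_i$ is nonzero at $\bvxi^*$, so for each $i$ the SOR sub-equation \eqref{eq:GSupdate} admits, on some open neighborhood of the corresponding projection of $\bvxi^*$, a unique continuously differentiable local solution $\phi_i$ as a function of the remaining $N-1$ coordinates. Composing $\phi_1, \ldots, \phi_N$ in the Gauss--Seidel order and shrinking domains as needed yields a single iteration map $\bvPhi$ defined on some $B(\bvxi^*, \delta)$ with $\bvPhi(\bvxi^*) = \bvxi^*$; this simultaneously establishes the existence and uniqueness of the iterate sequence claimed by the theorem.

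Next I would compute $\bvJ_{\bvPhi}(\bvxi^*)$ through implicit differentiation of the sub-updates. Differentiating each equation $g_i(\xi_1^{(s+1)}, \ldots, \xi_{i-1}^{(s+1)}, \xi_i^{(s+1)}, \xi_{i+1}^{(s)}, \ldots, \xi_N^{(s)}) = 0$ while remembering that the first $i-1$ arguments depend on $\bvxi^{(s)}$ through the previous sub-updates, and assembling the resulting scalar identities into matrix form at $\bvxi^*$, yields
\[
\bvD(\bvxi^*)\, d\bvxi^{(s+1)} = \bvL(\bvxi^*)\, d\bvxi^{(s+1)} + \bvU(\bvxi^*)\, d\bvxi^{(s)}.
\]
Since $\bvD(\bvxi^*) - \bvL(\bvxi^*)$ is lower-triangular with nonzero diagonal and hence invertible, solving gives $\bvJ_{\bvPhi}(\bvxi^*) = [\bvD(\bvxi^*) - \bvL(\bvxi^*)]^{-1} \bvU(\bvxi^*)$. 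The spectral-radius hypothesis then feeds directly into Ostrowski's theorem, delivering a ball $B(\bvxi^*, \delta')$ on which the iterates remain and converge to $\bvxi^*$.

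The main obstacle I anticipate is bookkeeping rather than conceptual: one has to pick a single $\delta$ for which all $N$ coordinate-wise IFT neighborhoods are simultaneously valid, $\bvPhi$ maps $B(\bvxi^*, \delta)$ into itself, and the contraction-in-some-equivalent-norm step behind Ostrowski's theorem actually applies. The matrix-algebra reduction must also be handled carefully, as the sign convention $\bvJ_{\bvg} = \bvD - \bvL - \bvU$ (with $\bvL$ and $\bvU$ the \emph{negatives} of the strictly lower and strictly upper parts of $\bvJ_{\bvg}$) flips signs inside the implicit differentiation, and any lapse there would propagate into the final matrix identity and spoil the identification of the limiting Jacobian.
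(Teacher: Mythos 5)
Your argument is correct and is essentially the standard proof of this result: the paper itself gives no proof, deferring to \cite{ortega2000iterative}, where the argument is exactly the one you outline (implicit-function-theorem construction of the Gauss--Seidel iteration map, identification of its Jacobian at the fixed point as $[\bvD(\bvxi^*) - \bvL(\bvxi^*)]^{-1}\bvU(\bvxi^*)$, and Ostrowski's local attraction theorem). Your sign bookkeeping with the convention $\bvJ_{\bvg} = \bvD - \bvL - \bvU$ is also handled correctly, so nothing further is needed.
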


See \cite{ortega2000iterative} for a proof. Note here that for a general matrix $\bvA \in \C^{N \times N}$ with eigenvalues $\nu_1, \ldots, \nu_N$, the function $\rho(\bvA)$ refers to the \emph{spectral radius} of $\bvA$, defined as $\max\{|\nu_1|, \ldots, |\nu_N|\}$. This theorem implies that the EIFM algorithm, once sufficiently close to a local solution $\bvxi^*$ of $\bvg(\bvxi) = \bvzero$, will produce a sequence converging superlinearly to $\bvxi^*$, provided that the spectral radius of the matrix $[\bvD(\bvxi^{(s)}) - \bvL(\bvxi^{(s)})]^{-1} \bvU(\bvxi^{(s)}))$ is less than 1. Verifying the nonsingularity of $\bvD(\bvxi^{(s)})$ is straightforward; it is easy to see that the first $N_5$ diagonal entries are equal to 1, while the remaining entries are of the form
\[
    \left.\sumtT \hat{u}^{(s)}_{j,t} \cdot \frac{\partial^2}{\partial \lambda^2} \log{f_{j,h}(y_{t,h}; \lambda)} \right|_{\lambda = \lambda^{(s)}_{j,h}} 
    \quad \text{or} \quad
    \left.\sumtT \hat{u}^{(s)}_{j,t} \cdot \frac{\partial^2}{\partial \theta^2} \log{c_j\left(F_{j,1}(y_{t,1}; \lambda^{(s)}_{j,1}), \ldots, F_{j,d}(y_{t,d}; \lambda^{(s)}_{j,d}) \mid \theta\right)} \right|_{\theta = \theta^{(s)}_j},
\]
which can be calculated directly. Some of these calculations may be unnecessary with prior knowledge of the involved densities; for example, the term on the left is certain to be negative if the mapping $\lambda \mapsto f_{j,h}(y; \lambda)$ is known to be strictly concave. The remaining elements of $\bvJ_{\bvg}$ at $\bvxi^{(s)}$ and the corresponding spectral radius $\rho([\bvD(\bvxi^{(s)}) - \bvL(\bvxi^{(s)})]^{-1} \bvU(\bvxi^{(s)}))$ can, in principle, be computed exactly following the $s$'th iteration of the EIFM algorithm, although in practice approximating these quantities numerically would be considerably easier than explicitly calculating all of the required derivatives.

\subsection{Analysis of the Estimator}\label{sub:estimator}

We now discuss conditions under which the subvector $\bveta^*$ of the solution $\bvxi^*$ ultimately produced by the EIFM algorithm --- now regarded as a statistical estimator $\bveta^*(\bvY\oneT)$ --- is consistent for the true parameter $\bveta_0$ for the model described in \autoref{sub:genmod} which is assumed to have generated the data. While our analysis of the algorithm's convergence  in \autoref{sub:algorithm} uses the same underlying technique as \cite{elashoff2004algorithm} for their ES algorithm, we cannot emulate their proof of consistency. Unlike their data, which are assumed to be iid, thus allowing them to use standard techniques (see \cite{godambe1960optimum}, for example) to prove the consistency and asymptotic normality of their estimator,  the latent variables in our setup --- namely, the conditional state membership indicators $U_{k,t}$'s and the $V_{j,k,t}$'s --- are  not independent. Their dependency is due to their strong connection to the underlying Markov structure of the HMM. We instead follow the approach of \cite{jensen2011asymptotic}, who studied the asymptotic properties of M-estimators used for HMMs. 

Specifically, we assume that the EIFM algorithm has produced a vector $\bvxi^* = (\bvu^*, \bvv^*, \bveta^*)$ which satisfies $\bvg(\bvxi^*) = \bvzero$, where $\bvg$ is as defined in \autoref{sub:algorithm}. To show that this estimator has desirable asymptotic properties under mild regularity conditions, it suffices to show that the lower $p = K + K^2 + dK + K$ components of $\bvg$ --- those corresponding to the FM-step of the algorithm --- constitute a system of unbiased estimating equations for $\bveta$ conditional on $\bvY\oneT$ after the latent variables $X\oneT$ have been marginalized out. Specifically, we construct a new vector-valued function $\bvpsi_T(\bveta; \bvY\oneT)$ as follows. First, we define the random $\R$-valued functions
\begin{align}
    \phi_{N_3 + j,t}\left(\bveta; X_{(t-1):t}, \bvY_t\right) &= \begin{cases} \pi_j - U_{j,1}, & t = 1\\ 0, & t > 1\end{cases}, && j \in \sX \label{eq:UE1} \\
    \phi_{N_4 + jk}\left(\bveta; X_{(t-1):t}, \bvY_t\right) &= \gamma_{j,k} \cdot \sum_{l=1}^K V_{j,l,t} - V_{j,k,t}, && j,k \in \sX \label{eq:UE2} \\
    \phi_{N_5 + jh}\left(\bveta; X_{(t-1):t}, \bvY_t\right) &= U_{j,t} \cdot \frac{\partial}{\partial \lambda} \log{f_{j,h}(Y_{t,h}; \lambda)}, &&  j \in \sX, \,  h \in \{1, \ldots, d\} \label{eq:UE3} \\
    \phi_{N_6 + j}\left(\bveta; X_{(t-1):t}, \bvY_t\right) &=  U_{j,t} \cdot \frac{\partial}{\partial \theta} \log{c_j\left(F_{j,1}(Y_{t,1}; \lambda_{j,1}), \ldots, F_{j,d}(Y_{t,d}; \lambda_{j,d}) \mid \theta\right)}, && j \in \sX. \label{eq:UE4}
\end{align}
Note that as functions, only \eqref{eq:UE1} depends on $t$; this is required in order to capture the initial distribution parameters $\bvpi$. The remaining functions are free of $t$ because of the time-homogeneity of the underlying Markov chain. Now, let
\[
    \bvphi_t\left(\bveta; X_{(t-1):t}, \bvY_t\right) = \left(\phi_{N_3 + 1,t}\left(\bveta; X_{(t-1):t}, \bvY_t\right),\ldots, \phi_{N_3 + p}\left(\bveta; X_{(t-1):t}, \bvY_t\right) \right)^\top
\]
and
\[
    \bvpsi_T(\bveta; \bvY\oneT) = \sumtT \mathbb{E}_{\bveta}\left[\bvphi_t\left(\bveta; X_{(t-1):t}, \bvY_t\right) \mid \bvY\oneT \right].
\]
From \eqref{eq:uhat_def} and \eqref{eq:vhat_def} and the fact that the algorithm has converged to the fixed point $\bvxi^*$, it is clear that 
\[
    u^*_{j,t} = \parPrb{\bveta^*}{X_t = j \mid \bvY\oneT} 
    \quad \text{and} \quad
    v^*_{j,k,t} = \parPrb{\bveta^*}{X_{t-1} = j, X_t = k \mid \bvY\oneT},
\]
and from these it is easy to verify that $\bveta^*$ solves $\bvpsi_T(\bveta; \bvY\oneT) = \bvzero$. Its use as an estimator of $\bveta_0$ is justified by the following fundamental fact:

\begin{theorem}
Under standard regularity conditions, $\bvpsi_T(\bveta; \bvY\oneT)$ defines an unbiased estimating equation. 
\end{theorem}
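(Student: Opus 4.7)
The statement asserts that $\mathbb{E}_{\bveta_0}\!\left[\bvpsi_T(\bveta_0; \bvY\oneT)\right] = \bvzero$ when $\bveta_0$ is the true parameter. The plan is to exploit the definition of $\bvpsi_T$ as a conditional expectation given $\bvY\oneT$. Applying the tower property, one immediately obtains
\[
    \mathbb{E}_{\bveta_0}\!\left[\bvpsi_T(\bveta_0; \bvY\oneT)\right]
    = \sumtT \mathbb{E}_{\bveta_0}\!\left[\bvphi_t\!\left(\bveta_0; X_{(t-1):t}, \bvY_t\right)\right],
\]
so the task reduces to verifying that each summand vanishes. I would then handle the four blocks of $\bvphi_t$ --- one per coordinate group $N_3 + j$, $N_4 + jk$, $N_5 + jh$, $N_6 + j$ --- separately.

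For the initial-distribution coordinates \eqref{eq:UE1}, the $t=1$ contribution gives $\pi_j - \mathbb{E}_{\bveta_0}[U_{j,1}] = \pi_j - \parPrb{\bveta_0}{X_1 = j} = 0$ by definition of $\bvpi$, while the $t > 1$ terms are identically zero. For the transition block \eqref{eq:UE2}, the key identity is that $\sum_{l=1}^K V_{j,l,t} = \one{X_{t-1} = j}$, so that
\[
    \mathbb{E}_{\bveta_0}\!\left[\gamma_{j,k} \sum_{l=1}^K V_{j,l,t} - V_{j,k,t}\right]
    = \gamma_{j,k}\cdot \parPrb{\bveta_0}{X_{t-1} = j} - \parPrb{\bveta_0}{X_{t-1} = j, X_t = k},
\]
which vanishes by the Markov property applied at time $t$.

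The two remaining blocks are the substantive ones, and reduce to the standard score identity once we condition on the hidden state. For \eqref{eq:UE3}, since $U_{j,t} = \one{X_t = j}$, I would condition on $X_t$ to obtain
\[
    \mathbb{E}_{\bveta_0}\!\left[U_{j,t}\cdot \tfrac{\partial}{\partial \lambda}\log f_{j,h}(Y_{t,h}; \lambda)\big|_{\lambda = \lambda_{j,h}^0}\right]
    = \parPrb{\bveta_0}{X_t = j}\cdot \mathbb{E}_{\bveta_0}\!\left[\tfrac{\partial}{\partial \lambda}\log f_{j,h}(Y_{t,h}; \lambda_{j,h}^0) \,\big|\, X_t = j\right],
\]
and the conditional expectation vanishes because $Y_{t,h}\mid(X_t = j)$ has marginal density $f_{j,h}(\cdot; \lambda_{j,h}^0)$, so the standard regularity allowing interchange of differentiation and integration gives a zero-mean score. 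The copula block \eqref{eq:UE4} is analogous: conditional on $X_t = j$, Sklar's theorem ensures the vector $(F_{j,1}(Y_{t,1}; \lambda_{j,1}^0), \ldots, F_{j,d}(Y_{t,d}; \lambda_{j,d}^0))$ has joint density $c_j(\cdot\mid \theta_j^0)$ on $[0,1]^d$, whence the copula score with respect to $\theta$ has conditional mean zero at the true $\theta_j^0$.

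The main obstacle is not the combinatorics but a careful statement of the regularity conditions bundled into the phrase ``standard regularity conditions.'' Specifically, one needs the marginal densities $f_{j,h}$ and copula densities $c_j$ to be sufficiently smooth in their parameters, with integrable dominating functions for $\partial_\lambda f_{j,h}$ and $\partial_\theta c_j$ so that differentiation under the integral sign is valid --- this is what upgrades the formal calculations above into a genuine proof. I would therefore list these conditions explicitly (continuity, interior parameter, interchange of differentiation and integration) at the start of the proof, and then the four block computations follow as above.
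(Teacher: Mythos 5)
Your proposal is correct and follows essentially the same route as the paper's proof: reduce via the tower property to showing each block of $\bvphi_t$ has unconditional mean zero, handle \eqref{eq:UE1} and \eqref{eq:UE2} by the definitions of $\bvpi$ and $\bvGamma$, and dispose of \eqref{eq:UE3} and \eqref{eq:UE4} by the zero-mean score identity for $f_{j,h}$ and for $c_j$ (the latter after the probability integral transform), conditioning on the hidden state. Your explicit conditioning on $X_t = j$ is a slightly cleaner phrasing of the paper's conditioning on $X\oneT$, but the substance is identical.
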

\begin{proof}
It suffices to check that each of \eqref{eq:UE1}--\eqref{eq:UE4} have expectations (with respect to the joint distribution of $\bvY\oneT$ and $X\oneT$) of $0$. Below, all expectations are taken with respect to a generic $\bveta \in \Theta$; we suppress the subscript for ease of notation. To begin with, 
\[
    \mathbb{E}\left[ \mathbb{E} \left[ \phi_{N_3 + j, 1}\left(\bveta; X_{(t-1):t}, \bvY_t\right) \mid \bvY\oneT \right] \right]
    = \pi_j - \mathbb{P}\left(X_1 = k \right)
    = 0
\]
and, similarly,
\begin{align*}
    \mathbb{E}\left[ \mathbb{E} \left[ \phi_{N_4 + jk}\left(\bveta; X_{(t-1):t}, \bvY_t\right) \mid \bvY\oneT  \right] \right]
    &= \gamma_{j,k} \cdot \sum_{l=1}^K \mathbb{P}\left(X_{t-1} = j, X_t = l\right) - \mathbb{P}\left(X_{t-1} = j, X_t = k\right)\\
    &= \gamma_{j,k} \cdot \mathbb{P}\left(X_{t-1} = j\right) - \mathbb{P}\left(X_t = k \mid X_{t-1} = j\right) \cdot \mathbb{P}\left( X_{t-1} = j\right)\\
    &= 0.
\end{align*}
Moreover, under standard regularity conditions required for the Bartlett identity
\begin{equation}\label{eq:IFM1}
    \mathbb{E}\left[\frac{\partial}{\partial \lambda} \log{f_{k,h}(Y_{t,h}; \lambda)}\right] = 0
\end{equation} to hold, we have 
\[
    \mathbb{E}\left[ \mathbb{E} \left[ \phi_{N_5 + jh}\left(\bveta; X_{(t-1):t}, \bvY_t\right) \mid X\oneT  \right] \right]
    = \mathbb{E}\left[ U_{j,t} \cdot \mathbb{E} \left[ \frac{\partial}{\partial \lambda} \log{f_{j,h}(Y_{t,h}; \lambda)} \right] \right]
    = 0.
\]
Likewise, assuming the same sorts of regularity conditions that permit the interchange of integration and differentiation, we have
\begin{align}\label{eq:IFM2}
\begin{split}
    \mathbb{E} \left[ \phi_{N_6 + j}\left(\bveta; X_{(t-1):t}, \bvY_t\right)\right]
    &=\mathbb{E} \left[ \frac{\partial}{\partial \theta} \log{c_j\left(\bvF_j(\bvY_t; \bvlambda_j) \mid \theta\right)} \right]\\
    &= \int_{\R^d} \frac{\partial}{\partial \theta} c_j\left(\bvF_j(\bvy_t; \bvlambda_j) \mid \theta\right) \prod_{h=1}^d f_{j,h}(y_{t,h}; \lambda_{j,h})\dif \bvy_t\\
    &= \int_{[0,1]^d} \frac{\partial}{\partial \theta} c_j\left(\bvu \mid \theta \right) \dif \bvu\\
    &= \frac{\dif}{\dif \theta} \int_{[0,1]^d} c_j\left(\bvu \mid \theta \right) \dif \bvu\\
    &= 0.
\end{split}
\end{align}
where $\bvF_j(\bvy_t; \bvlambda_j) := \left(F_{j,1}(y_{t,1}; \lambda_{j,1}), \ldots, F_{j,d}(y_{t,d}; \lambda_{j,d}) \right)$ and we have made the usual change of variables $u_h = F_{j,h}(y_{t,h}; \lambda_{j,h})$ for $h \in \{1, \ldots, d\}$. This gives  
\[
    \mathbb{E}\left[ \mathbb{E} \left[ \phi_{N_6 + j}\left(\bveta; X_{(t-1):t}, \bvY_t\right) \mid X\oneT  \right] \right] 
    = \mathbb{E}\left[U_{j,t} \cdot \mathbb{E} \left[ \frac{\partial}{\partial \theta} \log{c_j\left( \bvF_j(\bvy_t; \bvlambda_j) \mid \theta\right)} \right] \right]
    = 0,
\]
which finishes the proof. 
\end{proof}

The IFM method is justified by \eqref{eq:IFM1} and \eqref{eq:IFM2}  \citep[see Subsection 4.1 of][] {joe1996estimation}. The  solution $(\tilde{\lambda}_1, \ldots, \tilde{\lambda}_d, \tilde{\theta})$ of these equations --- at least in the presence of iid data --- is shown to be a consistent and asymptotically normal estimator of the true data-generating parameter as $T \to \infty$. The asymptotic variance of the estimators is given by the inverse Godambe information matrix. However, in order to account for the underlying dependence structure of the HMM, stronger regularity conditions are necessary for the same conclusion to hold. These conditions are encapsulated in the findings of \cite{jensen2011asymptotic}, which establishes a central limit theorem for the sequence of estimators produced by the EIFM algorithm. The relevant assumptions are restated in \autoref{app:regconditions} using our notation; we present the main theorem here:

\begin{theorem}[\cite{jensen2011asymptotic}, Theorem 1]\label{thm:jensensequence}
Let $\bvG_t = \mathrm{Var}_{\bveta_0} \hspace{-0.25em}\left( \bvpsi_t(\bveta_0; \bvY_{1:t}) \right)$ and $\bvH_t = \mathbb{E}_{\bveta_0}\left[ - \bvJ_{\bvpsi_t(\cdot; \bvY_{1:t})}(\bveta_0)\right]$. Under Assumptions 1-4 of \autoref{app:regconditions}, there exists a consistent sequence $\{\hat{\bveta}_T\}$ solving $\bvpsi_T(\bveta; \bvY\oneT) = \bvzero$ such that \[\sqrt{T} \cdot \bvG_T^{-1/2} \bvH_T(\hat{\bveta}_T - \bveta_0) \inD \sN(\bvzero, I)\]as $T \to \infty$ under $\bveta_0$.
\end{theorem}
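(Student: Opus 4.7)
The statement is presented as a direct application of Theorem~1 of \cite{jensen2011asymptotic}, so my plan is primarily to verify that the estimating equation $\bvpsi_T$ fits the framework of that theorem, rather than to re-derive a CLT from scratch. The central observation is that, by construction, $\bvpsi_T(\bveta; \bvY\oneT) = \sumtT \Eeta{\bvphi_t(\bveta; X_{(t-1):t}, \bvY_t) \mid \bvY\oneT}$ is an additive functional in which each summand is the conditional expectation, given the observed sequence, of a \emph{local} function depending only on two consecutive hidden states and the observation at time $t$. This is precisely the structural form to which Jensen's asymptotic theory for HMM M-estimators applies, with the unbiasedness property $\Eeta{\bvphi_t} = \bvzero$ (established in the preceding theorem) furnishing the key identifying condition at $\bveta_0$.

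The first step in the plan is to recast the problem cleanly. I would introduce the contrast function $M_T(\bveta) = T^{-1}\bvpsi_T(\bveta; \bvY\oneT)$ and argue, using the unbiasedness of the $\bvphi_t$'s together with a law of large numbers for the underlying stationary (or geometrically ergodic) hidden Markov chain, that $M_T(\bveta) \to M_\infty(\bveta)$ uniformly on compacta with $M_\infty(\bveta_0) = \bvzero$ being isolated. This part uses Assumptions~1--2 of the appendix: compactness/identifiability of the parameter space and regularity (smoothness and dominated derivatives) of the marginal densities $f_{k,h}$ and copula densities $c_k$ in $\bveta$. Identifiability up to label-swapping is essentially inherited from the IFM set-up, since within each state the scores for the margins identify the $\lambda_{k,h}$'s and the copula score then identifies $\theta_k$ once the margins are pinned down.

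The second step is to establish the CLT for $T^{-1/2}\bvpsi_T(\bveta_0; \bvY\oneT)$. Here I would not re-prove a CLT but rather invoke Jensen's result, which handles dependent increments of the exact form we have: conditional expectations of local functions on an HMM. To satisfy Assumption~3, I need to verify geometric ergodicity (or at least a suitable mixing condition) for the joint process $(X_t, \bvY_t)$; for a finite, time-homogeneous chain this reduces to irreducibility/aperiodicity of $\bvGamma$, which one can assume directly. Assumption~4 typically demands finite second (or slightly higher) moments of the score components $\partial_\lambda \log f_{k,h}$ and $\partial_\theta \log c_k$ under $H_k$, plus interchange of differentiation and integration - the copula interchange is exactly the calculation performed in \eqref{eq:IFM2}, and a strengthening of it for second derivatives yields the information matrices $\bvG_T$ and $\bvH_T$. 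Finally, consistency and asymptotic normality of a root of $\bvpsi_T(\bveta;\bvY\oneT)=\bvzero$ follow from Jensen's theorem via a standard one-step Taylor expansion around $\bveta_0$, inverting $\bvH_T$ and applying Slutsky.

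The main obstacle, and the step I would spend the most care on, is verifying the dominance/moment conditions for the copula score. Unlike the marginal scores, $\partial_\theta \log c_k$ can behave badly near the boundary of $[0,1]^d$, which is precisely the regime where copulas are least well-behaved (and which motivated the pitfall discussion in Section~3.1). I would therefore check, family by family, that for the ``classical'' one-parameter Archimedean and elliptical copulas considered in the paper, the score and its first derivative in $\theta$ have $H_k$-integrable envelopes on a neighborhood of $\theta_{0,k}$, using the explicit forms derived by \cite{hofert2012likelihood}. Once those envelope conditions are in place, all four of Jensen's assumptions hold and the theorem can be applied verbatim to yield the stated limit distribution.
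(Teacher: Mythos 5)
Your overall strategy matches the paper exactly, because the paper does not prove this result at all: it is imported verbatim as Theorem~1 of \cite{jensen2011asymptotic}, with the proof explicitly deferred to that reference and only the regularity conditions restated (in the paper's notation) in \autoref{app:regconditions}. So ``verify the hypotheses and invoke Jensen's theorem'' is precisely the intended reading, and your identification of the key structural fact --- that $\bvpsi_T$ is a sum of conditional expectations of local functions $\bvphi_t$ of $(X_{t-1},X_t,\bvY_t)$, with unbiasedness supplied by the preceding theorem --- is the right one. Two caveats. First, your paraphrase of the four assumptions does not match the ones actually listed in the appendix: there, Assumption~1 is strict positivity of the entries of $\bvGamma_0$ together with finiteness of the mixture density, Assumption~2 places $\bvphi_t$ and the derivatives of $\bvphi_t$ and of $\omega_t$ in the moment classes $C_4$ and $C_{3,1}$, and Assumptions~3--4 are nondegeneracy conditions (a linear-in-$t$ lower bound on $\bva^\top\bvG_t\bva$ and a lower bound on the eigenvalues of $\bvH_t^\top\bvH_t$), not compactness, identifiability, or geometric ergodicity per se --- though positivity of $\bvGamma_0$ does deliver the ergodicity you want. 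Second, the verification program you outline for the copula scores (checking $H_k$-integrable envelopes for $\partial_\theta \log c_k$ and its derivative near the boundary of $[0,1]^d$, family by family) is a genuine addition: the paper never carries out this verification for its model, so your proposal is, if anything, more complete than what the paper offers, and that envelope check is exactly where the $C_4$/$C_{3,1}$ conditions of Assumption~2 would need to be established in practice.
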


See \cite{jensen2011asymptotic} for a proof. 

In all but the most trivial situations, the  variances and expectations needed to calculate  $\bvG_T^{-1}$ and, respectively, $\bvH_T$ are analytically intractable. Therefore, we suggest two alternative approaches for estimating standard errors and confidence intervals for $\bveta^*$. In both cases, we start by sampling iid copies $(\bvY\oneT^{(1)}, X\oneT^{(1)}), \ldots, (\bvY\oneT^{(n)}, X\oneT^{(n)})$ from the model under $\bveta^*$. The first approach involves estimating $\bvG_T$ and $\bvH_T$ via Monte Carlo, using $\bveta^*$ as a plug-in estimator for $\bveta_0$. Using the generated samples, we compute
\[
    \hat{\bvG}_T = \frac{1}{n-1}\sumin \left(\bvpsi_T(\bveta^*; \bvY\oneT^{(i)}) -  \bar{\bvpsi}_T(\bveta^*; \bvY\oneT)\right) \left(\bvpsi_T(\bveta^*; \bvY\oneT^{(i)}) -  \bar{\bvpsi}_T(\bveta^*; \bvY\oneT)\right)^\top
\]
where $\bar{\bvpsi}_T(\bveta^*; \bvY\oneT) = \frac{1}{n}\sumin \bvpsi_T(\bveta^*; \bvY\oneT^{(i)})$, and 
\[
    \hat{\bvH}_T = \frac{1}{n} \sumin \left[-\bvJ_{\bvpsi_T(\cdot; \bvY\oneT^{(i)})}(\bveta^*)\right].
\]
Note that the entries of each $\bvpsi_T(\bveta^*; \bvY\oneT^{(i)})$ can be computed by replacing the $U_{j,t}$'s and $V_{j,k,t}$'s in \eqref{eq:UE1}--\eqref{eq:UE4} with their conditional expectations given $\bvY\oneT$ (explicit forms for these are provided by \autoref{lemma:CP1}).  The Jacobian that is required to compute $\hat{\bvH}_T$ can be challenging to derive using symbolic computation software. 

The second approach relies on the  parametric bootstrap estimator of \cite{efron1994introduction}, as  described by \cite{zucchini2017hidden} for HMMs.  Specifically, for each $ i \in \{1, \ldots, n\}$, we re-run the EIFM algorithm on $(\bvY\oneT^{(i)}, X\oneT^{(i)})$ to produce a new estimator $\hat{\bveta}^{(i)}$, and then estimate the covariance matrix of $\bveta^*$ via 
\[
    \hat{\bvV}_{\bveta^*} = \frac{1}{n-1} \sumin \left(\hat{\bveta}^{(i)} - \bar{\bveta} \right)\left(\hat{\bveta}^{(i)} - \bar{\bveta} \right)^\top,
\]
where $\bar{\bveta} = \frac{1}{n} \sumin \hat{\bveta}^{(i)}$. Computation time is reasonable for both methods, because the required tasks are parallelizable. Once all the first and second order derivatives are computed, however, the Monte Carlo method is faster to implement (since it does not require re-fitting the model), but may be prone to numerical errors when the true parameter $\bveta_0$ lies near the boundary of $\bvTheta$.

\section{Simulations and Applications}\label{sec:simsandapps}

\subsection{A Simulation Study}\label{sub:simulationstudies}

The central aim of our simulation study was to demonstrate that introducing the copula into the HMM formulation is advantageous when the dependence structure between observed variables is informative about the latent process, and that it is not harmful otherwise. Secondly, it provides a proof of concept for the EIFM algorithm. Finally, we demonstrate the model's performance under copula misspecification.

All generative models have 3-state HMMs with the same transition probability matrix 
\[
    \bvGamma = \begin{pmatrix}
    0.5 & 0.25 & 0.25 \\
    0.25 & 0.5 & 0.25 \\
    0.25 & 0.25 & 0.5 \\
    \end{pmatrix}.
\]
and initial distribution $\bvpi = (0, 1, 0)$. The marginal distributions are Gaussian and the dependence structure is provided by a Frank copula, whose copula parameter varies between models. Specifically, for $i \in \{1, 2, 3, 4\}$, the $i$'th model has the stochastic representation
\[
    \bvY_t^{(i)} \mid (X_t^{(i)} = k) \sim C_\text{Frank}\left(\sN(\mu_{k,1}, \sigma = 0.5), \sN(\mu_{k,2}, \sigma = 0.5) \mid \theta_{i,k}\right), \quad k \in \{1, 2, 3\}
\]
where $\mu_{k,h} = k + 3 \cdot \one{h=2}$ and 
\begin{align*}
    \bvtheta_1 &= (30, 30, 30)\\
    \bvtheta_2 &= (5, 30, 30)\\
    \bvtheta_3 &= (5, 5, 30)\\
    \bvtheta_4 &= (5, 5, 5).
\end{align*}

We refer to each $i$ as a dependence ``scenario''; thus, the first scenario features strong dependence in all states ($\theta = 30$, or $\tau \approx 0.874$), while the second features weaker dependence in the first state ($\theta = 5$ or $\tau \approx 0.457$), and so on. Note that in the first and fourth scenarios, the copula parameters within each state-dependent distribution are identical, and therefore only the marginals are informative of the underlying states; this contrasts with the numerical example of \autoref{sub:numericalillustration}, in which \emph{only} the copula informed the hidden states. For the $i$th scenario, we generated a time series of length $T = 500$ from the corresponding model with true parameter $\bveta = (\mu_{1,1},\ldots, \mu_{3,2}, \bvtheta_i)$.

First, we assessed the EIFM algorithm's ability to perform both parameter estimation and state classification. Note that optimal values for the $\mu_{k,h}$'s are available in closed form at each iteration of the algorithm via \eqref{eq:expfamily}, thereby limiting any potential numerical obstacles in the algorithm (initializing values aside) to the estimation of the $\theta_{i,k}$'s in Step 2c. To provide the EIFM algorithm with plausible initial parameter values, $\bveta_i^{(0)}$, for each scenario $i$, we first ran an initial EM algorithm assuming independent Gaussian state-dependent distributions with $\mu^{(0)}_{k,h} = (-1)^{h} \cdot |Z_{k,h}|$ and $Z_{k,h} \iid \sN(0,1)$. Using the parameter estimates from this initial run, we used local decoding to make the classifications $\hat{X}^{(i)}_1, \ldots, \hat{X}^{(i)}_T$ and then set 
\[
    \hat{\tau}_{i,k}^{(0)} =
    \frac{2}{|C_{i,k}|\cdot(|C_{i,k}|-1)}\sum_{\substack{s,t \, \in \,  C_{i,k} \\ s < t}}  \mathrm{sgn}\left(y^{(i)}_{1,s} - y^{(i)}_{1,t} \right) \cdot \mathrm{sgn}\left(y^{(i)}_{2,s} - y^{(i)}_{2,t} \right),
\]
where $C_{i,k} = \{t: \hat{X}^{(i)}_t = k\}$; we then set $\theta_{i,k}^{(0)} = \theta_\text{Frank}(\hat{\tau}_{i,k}^{(0)})$. In words, our initial estimate for $\theta_{i,k}$ was based on the empirical Kendall's $\tau$ calculated on the subset of observations which was predicted (by the independent Gaussians model) to have arisen from state $k$. The remaining parameters, common to both the independent Gaussians model and the true data-generating model, were initialized in the second stage algorithm as their corresponding estimates from the initial run.

The entire process was independently replicated 20 times in parallel; we show boxplots of the resulting parameter estimates in \autoref{fig:simulationPlots}. The estimates are observed to be unbiased, and the copula parameters are generally estimated well. \autoref{fig:classificationPlots} shows the overall accuracy of the state classifications (via local decoding) across the four dependence scenarios; label-switching was addressed by selecting the best permutation of the predicted labels for each scenario and each simulation, and then averaging the results. \autoref{fig:classificationPlots} also shows classification accuracy for the state predictions produced by two misspecified models: one in which the Frank copula is replaced by the independence copula (which corresponds to the independent Gaussians model used to generate initial values for the EIFM algorithm), and another in which it is replaced by a Gauss copula. It is clear that across virtually all states and scenarios, the accuracy of the true model was equal or better than that of both misspecified models.

\begin{figure}[H]
    \centering
    \includegraphics[width=\textwidth]{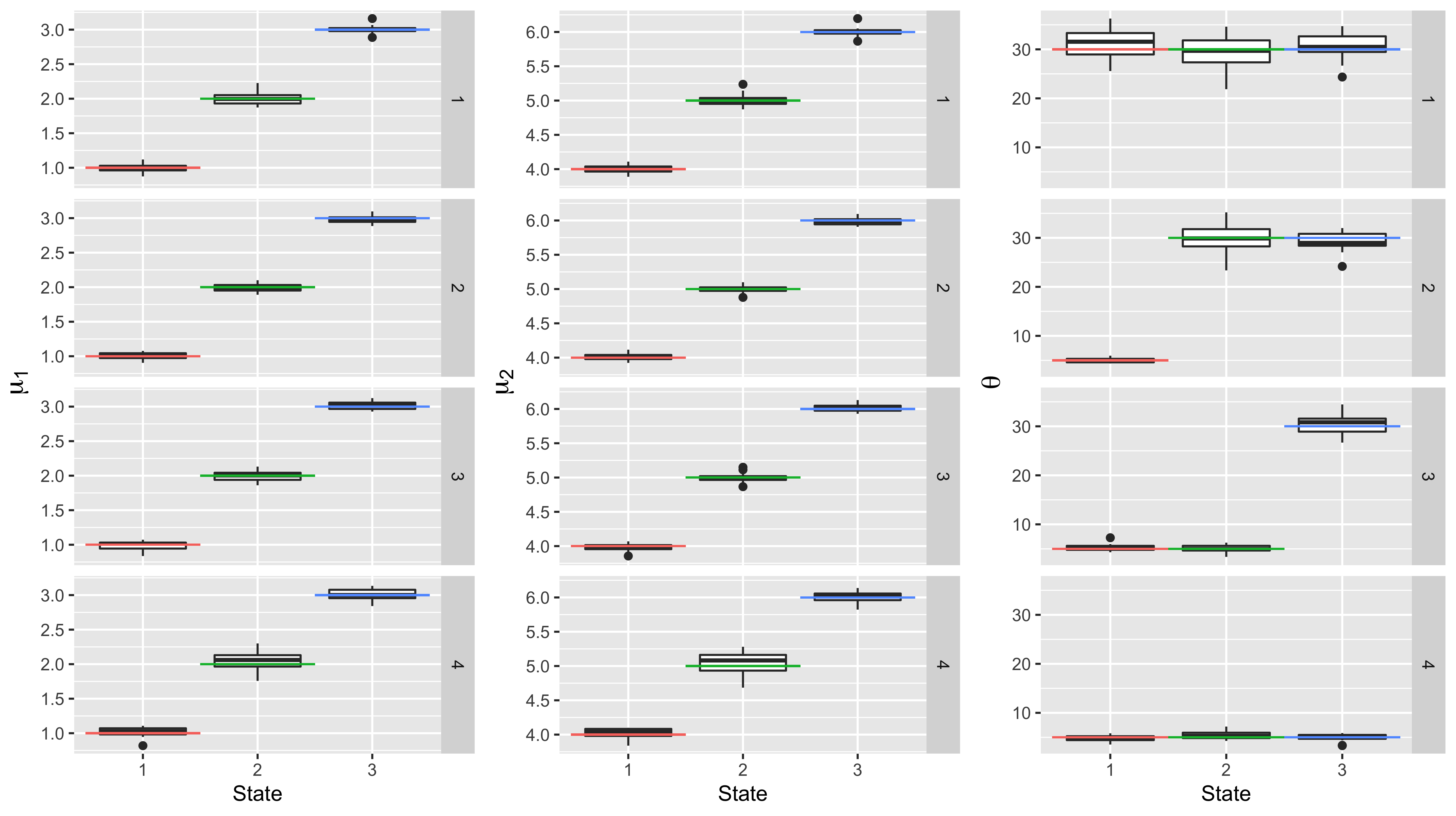}
    \caption{Parameter estimates based on 20 independent simulations and EIFM algorithm runs for the bivariate 3-state HMMs described in \autoref{sub:simulationstudies}. The $i$'th row corresponds to the $i$'th ``scenario''. Colored horizontal lines indicate true parameter values.}
    \label{fig:simulationPlots}
\end{figure}

\begin{figure}[H]
    \centering
    \includegraphics[width=\textwidth]{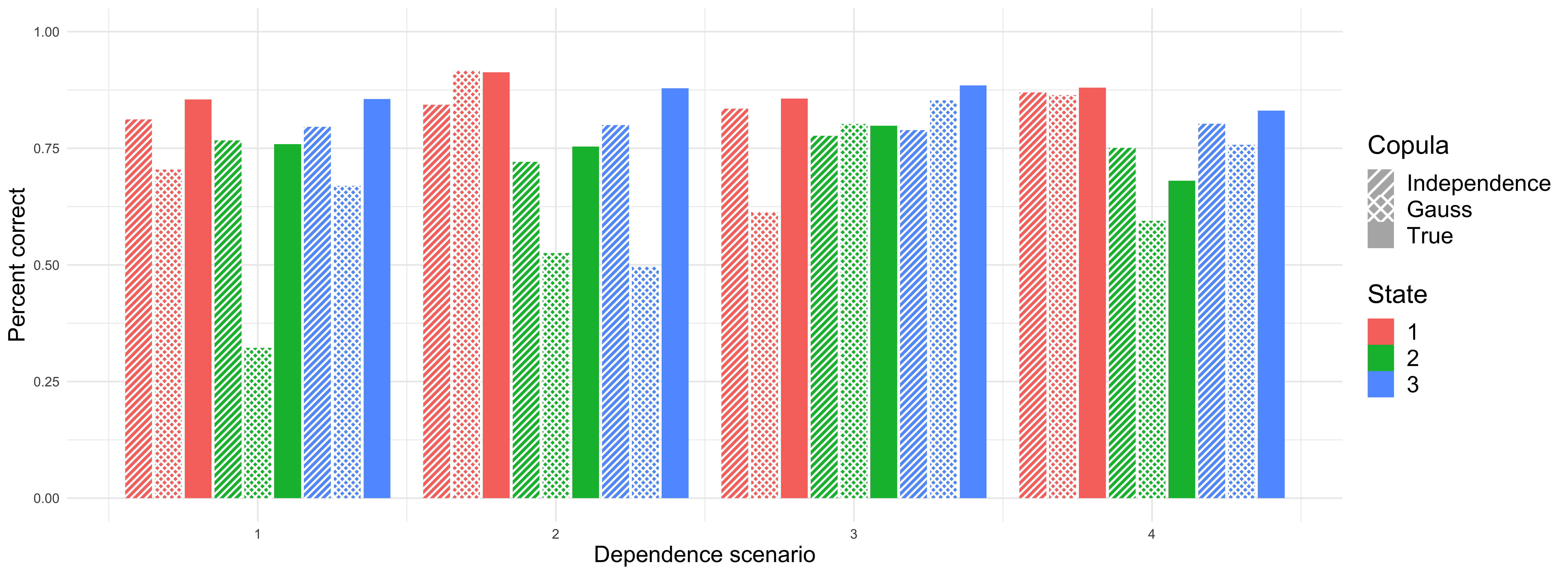}
    \caption{Average percent of correct state classifications for the bivariate 3-state HMMs under the true model (solid bars) and the misspecified models featuring the independence copula (striped bars) and the Gauss copula (hatched bars) described in \autoref{sub:simulationstudies}.}
    \label{fig:classificationPlots}
\end{figure}

\subsection{Occupancy Detection}\label{sub:occupancy}

The ability to detect whether a room is occupied using sensor data (such as temperature and $CO_2$ levels) can potentially reduce unnecessary energy consumption by automatically controlling HVAC and lighting systems, without the need for motion detectors or other methods that could constitute an invasion of privacy \citep{candanedo2016accurate}. To assess the performance of the copula-within-HMM model and the EIFM algorithm on real-world data, we consider three publicly-available labelled datasets presented by \cite{candanedo2016accurate} consisting of multivariate time series of four environmental measurements (light, temperature, humidity, $CO_2$) and one derived metric (the humidity ratio $W$, whose construction is described in \cite{candanedo2016accurate}) captured in an office room in Belgium, along with binary indicators for whether the room was occupied at the time of measurement. 

In \cite{candanedo2017methodology} various HMMs were fit to this data via the EM algorithm using the \texttt{depmixS4} package \citep{depmixS4_2010}; the models featured either one or two (lagged) sensor measurements as the observed data, and the occupancy indicators as the hidden states. The univariate data was taken to be normally distributed, and the bivariate data was assumed to feature independent normal margins. The performance of the models was assessed by comparing predictions made by local decoding to the ground-truth labels via the zero-one loss (as well as several other measures commonly used in binary classification, such as sensitivity and specificity).

We also illustrate our model using bivariate data. We processed the data in the same manner as \cite{candanedo2017methodology}: we averaged the data across five-minute periods, and then calculated the lagged differences to obtain time series of lengths 1629, 533, and 1951 respectively. We selected as our observed data the series $\bvY_t = ((CO_2)_t, W_t)$ and chose the marginal distributions and copulas for each state using several heuristics. First, we selected several common families of parametric copulas (the Frank, Clayton, Gumbel, Joe, and Gauss families), and for each we carried out a goodness-of-fit test based on the pseudo-observations using the multiplier bootstrap method \citep{kojadinovic2011fast}; we then chose the parametric family based on the lowest corresponding Cram\'er-von Mises test statistic. This process yielded a Clayton copula for State 1 and a Frank copula for State 2 (see \autoref{tab:GoFcopulas}), choices which were corroborated by scatterplots of the pseudo-observations (see \autoref{fig:pobs_occupancy}). We used a similar criterion to select the marginal distributions, limiting our choice of distributions to those in exponential families (with appropriate supports); the Gaussian distribution gave the best results for both marginal components within both states.

Labelling the unoccupied state as `1' and the occupied state as `2', our model thus has the stochastic representation
\begin{align*}
    \bvY_t \mid (X_t = 1) &\sim C_\text{Clayton}\left(\sN(\mu_{1,1}, \sigma^2_{1,1}), \sN(\mu_{1,2}, \sigma^2_{1,2}) \mid \theta_{1}\right)\\
    \bvY_t \mid (X_t = 2) &\sim C_\text{Frank}\left(\sN(\mu_{2,1}, \sigma^2_{2,1}), \sN(\mu_{2,2}, \sigma^2_{2,2}) \mid \theta_{2}\right).\\
\end{align*}

\begin{table}[ht]
\centering
\begin{tabular}{r|rrrrr}
  \toprule
 State & Frank & Clayton & Gumbel & Joe & Gauss \\ 
  \midrule
1 & 0.356 & \textbf{0.255} & 0.423 & 0.770 & 0.345 \\ 
  2 & \textbf{0.018} & 0.433 & 0.038 & 0.206 & 0.045 \\ 
   \bottomrule
\end{tabular}
\caption{Cramér-von Mises test statistics based on pseudo-observations computed from unoccupied (Row 1) and occupied (Row 2) subsets.}
\label{tab:GoFcopulas}
\end{table}

\begin{figure}[ht]
    \centering
    \includegraphics[scale=0.15]{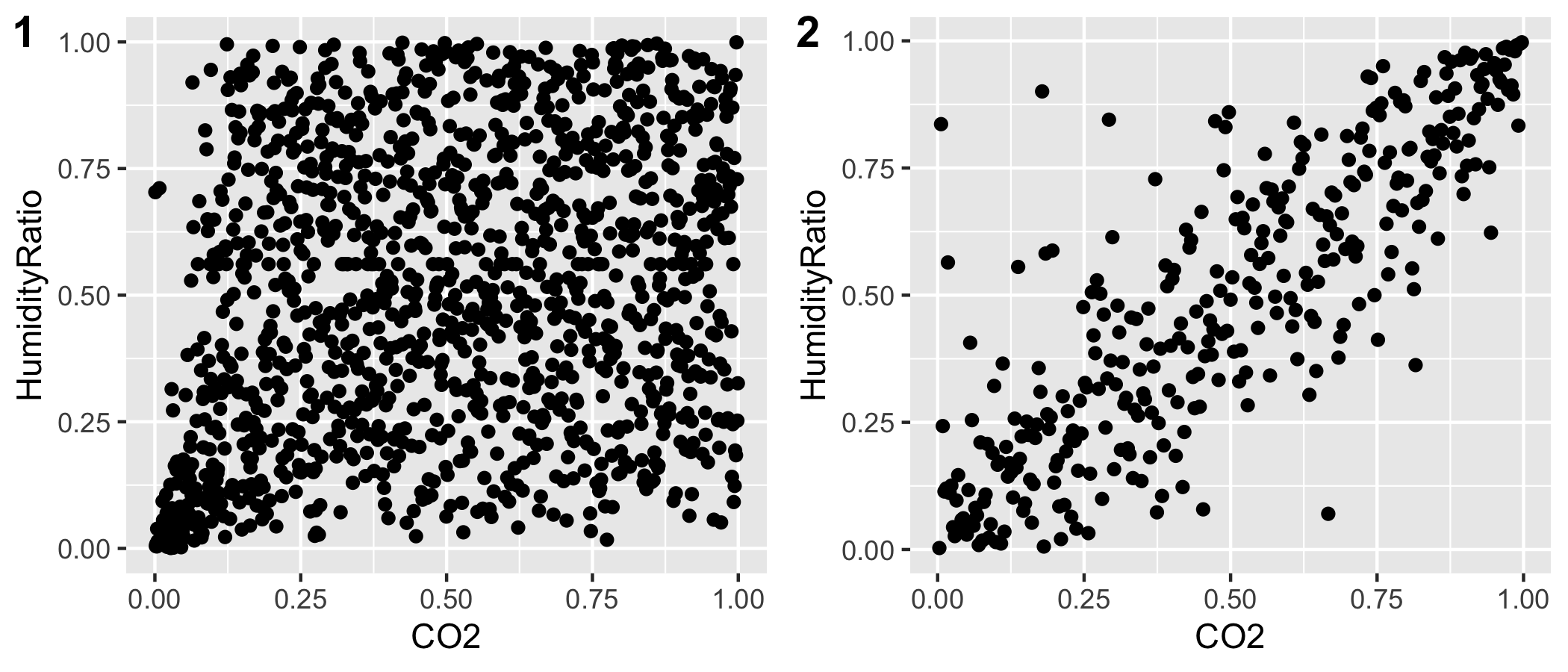}
    \caption{Pseudo-observations computed from unoccupied (Panel 1) and occupied (Panel 2) subsets.}
    \label{fig:pobs_occupancy}
\end{figure}

The parameter estimates produced by the EIFM algorithm (including those for the initial distribution of the Markov chain $\bvpi$ and the chain's transition probability matrix $\bvGamma$) are given in \autoref{tab:occupancyests}, along with 95\% bootstrap confidence intervals produced using the method described in \autoref{sub:estimator}. Local decoding was used to classify all three datasets, and the accuracy of these classifications was assessed via the zero-one loss. To compare the predictive performance of the copula-within-HMM model with that of a more basic model, we repeated the experiment using an independence copula. The performance of both models is shown in \autoref{tab:occupancyZO}. The copula model performs well, correctly classifying almost 90\% of the states in the training dataset. While the independence model also performs well on the training dataset (similar to results in \cite{candanedo2017methodology}), it performs noticeably worse for classifying the states in the test dataset.

\begin{table}[ht]
\centering
\begin{tabular}{r|rrr}
  \toprule
 Parameter & Lower & Estimate & Upper \\ 
  \midrule
$\pi_{1}$ & -0.336 & $2.243 \times 10^{-8}$ & 0.336 \\ 
  $\pi_{2}$ & 0.664 &     1 & 1.336 \\ 
  $\gamma_{1,1}$ & 0.9882 & 0.9935 & 0.9988 \\ 
  $\gamma_{1,2}$ & 0.002489 & 0.01562 & 0.02876 \\ 
  $\gamma_{2,1}$ & 0.001206 & 0.006509 & 0.01181 \\ 
  $\gamma_{2,2}$ & 0.9712 & 0.9844 & 0.9975 \\ 
  $\mu_{1,1}$ & -0.4472 & -0.2685 & -0.08979 \\ 
  $\mu_{1,2}$ & $-8.785 \times 10^{-7}$ & $-3.353 \times 10^{-7}$ & $2.078 \times 10^{-7}$ \\ 
  $\mu_{2,1}$ & -1.876 & 0.873 & 3.622 \\ 
  $\mu_{2,2}$ & $-6.644 \times 10^{-7}$ & $2.562 \times 10^{-6}$ & $5.788 \times 10^{-6}$ \\ 
  $\sigma_{1,1}$ & 2.732 & 2.854 & 2.976 \\ 
  $\sigma_{1,2}$ & $9.482 \times 10^{-6}$ & $9.894 \times 10^{-6}$ & $1.031 \times 10^{-5}$ \\ 
  $\sigma_{2,1}$ & 30.01 & 31.89 & 33.77 \\ 
  $\sigma_{2,2}$ & $3.418 \times 10^{-5}$ & $3.613 \times 10^{-5}$ & $3.808 \times 10^{-5}$ \\ 
  $\theta_{1}$ & -0.2687 & 0.1209 & 0.5106 \\ 
  $\theta_{2}$ & 8.918 & 9.776 & 10.63 \\ 
   \bottomrule
\end{tabular}
\caption{Parameter estimates and 95\% bootstrap confidence intervals for the copula-within-HMM model fit to the training data.}
\label{tab:occupancyests}
\end{table}

\begin{table}[ht]
\centering
\begin{tabular}{r|rrr}
  \toprule
Copula Model & Train & Test 1 & Test 2 \\ 
  \midrule
  Independence & 0.8950 &0.8459 & 0.6795 \\ 
Clayton/Frank & 0.8986 & 0.8515 & 0.6959 \\ 
   \bottomrule
\end{tabular}
\caption{Overall state classification accuracy for the training dataset and the two test datasets using either the independence or Clayton/Frank copula.}
\label{tab:occupancyZO}
\end{table}

\section{Discussion}

This paper proposes a copula-based approach for integration of information from multivariate  observations in a HMM setting. The dependence between the components of the vector is captured using a copula, and we demonstrate that ignoring the dependence can deteriorate the estimation capacity of the statistical model. The computational challenges encountered when fitting the copula HMM model are met using a new iterative algorithm, the EIFM algorithm, which is designed broadly along the ideas of the EM algorithm but differs in essential ways; most importantly, it makes computation possible when other traditional methods fail.  

The EIFM algorithm opens the door to developing other common HMM extensions when considering copula-based state-dependent distributions, most of which will impact the IFM step in the algorithm. One of the most common extensions is to assume a time-inhomogeneous state process such that the dynamics of the transition probability matrix $\boldsymbol{\Gamma}$ depend on time-varying covariates of interest. A related extension is to allow for the state-dependent parameters to vary over time. However, once the copula is introduced, variations of the HMM will usually require specific customization, depending on the application itself; for example, one might want to allow for the marginal distributions or the copula parameters (or both) to vary over time, or for situations in which components of the vector of observations are observed at different times. One might also consider nonparametric estimation of marginals. 

Overall, our aim is to challenge the assumption of conditional contemporaneous independence often made in HMM settings when working with multivariate data, and when assuming longitudinal independence, to challenge the default distributional assumption in the choice of the multivariate Gaussian. We demonstrate the utility of both the Frank and Clayton copulas in an HMM setting and demonstrate that with the aid of labelled data, we can even choose different copula families across states. However, our work here is limited to continuous-valued observation processes and to a bivariate setting. In cases where the dimension of the observation process is greater than two, we will investigate the use of vine copulas \citep{czado,min}. Further, the support of the observation process may not always be the same across dimensions. For example, having a multivariate process where one dimension is discrete-valued, another continuous-valued and a third circular-valued is not uncommon in ecological and environmental applications of HMMs \citep{hodel2022}. Such situations, will require customizing the copula techniques that were developed for mixed variables \citep{cra-sabeti} or discrete ones \citep{genest2014empirical}.

Lastly, in this work we limited our HMM presentation to a discrete-time finite-state framework. However, the EIFM algorithm can be adapted to support inference in other common HMM frameworks such as the continuous-time HMM and hidden semi-Markov model (HSMM). In both of these scenarios, the time at which the state evolves differs in distribution from that of the HMM presented here. Continuous-time HMMs are commonly applied in medical applications where the data collected over time at highly irregular time intervals. With the advent of devices and tests that can collect multivariate observations, the copula-within-HMM framework has the potential to better capture the correlation between various aspects of someone's health profile. The HSMM is a discrete-time framework but with the added choice of selecting appropriate state-dwell  distributions, (i.e., the amount of time spent in a state before switching). In a basic HMM, the state-dwell time distribution is necessarily geometric, while in a HSMM it is common to select Poisson or negative binomial distributions (although semi-parametric forms can also be specified \citep{pohle2022}). In all cases, modifications to the EIFM algorithm will need to be made to the forms presented in both the E-step and the IFM-step.

\section*{Acknowledgments}
RZ was supported by an Ontario Graduate Scholarship. RVC's  and VLB's work have been supported by the Natural Sciences and Engineering Research Council of Canada. RZ thanks Mich\"ael Lalancette and Yanbo Tang for helpful discussions.

\bibliography{References}

\newpage

\appendix
      
\section{Appendix}\label{sec:appendix}

\renewcommand{\log}[1]{\hspace{1pt}\mathrm{log}\left(#1\right)}

\subsection{Regularity Conditions for \autoref{thm:jensensequence}}\label{app:regconditions}

The following definition and assumptions are extracted from \cite{jensen2011asymptotic}.

\begin{definition*}
A set of functions $\sA := \{\bva_t: \Theta \times \sY \times \sX^2 \to \R^n\}_{t \geq 1}$ belongs to the class $C_k$ if there exist functions $\{a_t^0:\sY \to \R\}_{t \geq 1}$, a constant $\delta_0 > 0$, and a constant $K$ such that for all $t \geq 1$,
\[
    \sup_{x_{1:2} \in \sX^2, \, \bveta \in B(\bveta_0, \delta_0)}\left|\bva_t(\bveta, \bvy, x_{1:2})\right| 
    \leq a^0_t(\bvy) 
    \qquad \text{and} \qquad 
    \mathbb{E}_{\bveta_0}\left[a_t^0(\bvY_t)^k \right] \leq K.
\]
Moreover, $\sA$ belongs to the class $C_{k,m}$ if it belongs to $C_k$, and in addition, there exist functions $\{\bar{a}_t:\sY \to \R\}_{t \geq 1}$ a constant $\bar{\delta}_0 > 0$, and a constant $\bar{K}$ such that for all $\bveta \in B(\bveta_0, \bar{\delta}_0)$,
\[
    \left|\bva_t(\bveta, \bvy, x_{1:2}) - \bva_t(\bveta_0, \bvy, x_{1:2})\right| 
    \leq |\bveta - \bveta_0| \cdot \bar{a}_t(\bvy) \text{ for all } x_{1:2} \in \sX^2
    \qquad \text{and} \qquad  
    \mathbb{E}_{\bveta_0}\left[\bar{a}_t(\bvY_t)^m \right] \leq \bar{K}.
\]
\end{definition*} 

Below, we write $\omega_t(\bveta; x_{(t-1):t}, \bvy_t) = \llog{\gamma_{x_{t-1}, x_t} \cdot h_{x_t}(\bvy_t)}$. Then the following assumptions are to be made for \autoref{thm:jensensequence} to hold:

\begin{enumerate}[\bfseries {Assumption }1:]
\item The entries of $\bvGamma_0$ are all strictly positive and $0 < \sumjK \nu_{j,t} \cdot h_j(\bvy_t) < \infty$ for all $\bvy_t \in \R^d$ and all $\bveta \in B(\bveta_0; \delta_0)$. 
\item $\{\bvphi_t(\bveta; x_{(t-1):t}, \bvy_t)\}_{t \geq 1}$ is of class $C_4$, and $\{\frac{\partial}{\partial \eta_r} \omega_t(\bveta; x_{(t-1):t}, \bvy_t) \}_{t \geq 1}$ and $\{ \frac{\partial}{\partial \eta_r} \bvphi_t(\bveta; x_{(t-1):t}, \bvy_t) \}_{t \geq 1}$ are of class $C_4$ and $C_{3,1}$, respectively, for each $r=1,\ldots, p$.
\item There exist constants $K_0 > 0$ and $t_0$ such that for $t > t_0$, 
\[
    \bva^T \bvG_t \bva \geq t K_0 \norm{\bva}^2 
    \qquad \text{for all } 
    \bva \in \R^{p}.
\]
\item There exist constants $c_0 > 0$ and $t_0$ such that for $t > t_0$, the eigenvalues of $\bvH_t^T \bvH_t$ are bounded below by $c_0$. 
\end{enumerate}

\subsection{Proofs}\label{app:proofs}

In the following, we write $X_{-t} = (X_1,\ldots, X_{t-1}, X_{t+1},\ldots, X_T)$ and $X_{-\{t-1,t\}} = (X_1,\ldots, X_{t-2}, X_{t+1},\ldots, X_T)$.

\begin{lemma}\label{lemma:CP1}
Under the model described by \eqref{eq:state-dependentDF}, we have 
\begin{equation}\label{eq:lemma1}
    \parPrb{\bveta}{X_t = k \mid \bvY\oneT = \bvy\oneT} 
    = \frac{h_k(\bvy_t) \cdot \sum_{x_{-t} \in \sX^{T-1}} \kappa(x_{-t}, \bvy_{-t}) \cdot \gamma_{x_{t-1},k} \cdot \gamma_{k, x_{t+1}}}{\sumjK h_j(\bvy_t) \cdot \sum_{x_{-t} \in \sX^{T-1}} \kappa(x_{-t}, \bvy_{-t}) \cdot \gamma_{x_{t-1},j} \cdot \gamma_{j, x_{t+1}}}.
\end{equation}
where
\[
    \kappa(x_{-t}, \bvy_{-t}) = \pi_{x_1} \prod_{s \neq t} h_{x_s}(\bvy_s) \cdot \prod_{s \neq t,t+1} \gamma_{x_{s-1}, x_s}.
\]
In particular, if the system constitutes a finite mixture model with $\parPrb{\bveta}{X_t = j} = \nu_{j}$, we have
\begin{equation}\label{eq:lemma1FMM}
    \parPrb{\bveta}{X_t = k \mid \bvY\oneT = \bvy\oneT} 
    = \frac{\nu_k \cdot h_k(\bvy_t)}{\sumjK \nu_j \cdot h_j(\bvy_t)}.
\end{equation}
\end{lemma}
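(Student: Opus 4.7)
The statement is a bookkeeping consequence of Bayes' rule combined with the standard factorisation of the joint distribution in an HMM. The plan is to compute both the numerator $\P_{\bveta}(X_t = k, \bvY_{1:T} = \bvy_{1:T})$ and the denominator $\P_{\bveta}(\bvY_{1:T} = \bvy_{1:T})$ by marginalising out the remaining latent variables $X_{-t}$, and to observe that all factors involving $x_t$ can be pulled outside of that marginal sum.

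First I would write the complete-data joint density explicitly from the HMM structure depicted in Figure \ref{fig:basicHMM}, namely
\[
\P_{\bveta}(X_{1:T} = x_{1:T}, \bvY_{1:T} = \bvy_{1:T}) = \pi_{x_1} \prod_{s=2}^{T} \gamma_{x_{s-1}, x_s} \prod_{s=1}^{T} h_{x_s}(\bvy_s).
\]
Substituting $x_t = k$ and separating out the factors that touch time $t$ gives a clean decomposition into a ``local'' piece $h_k(\bvy_t) \cdot \gamma_{x_{t-1},k} \cdot \gamma_{k, x_{t+1}}$ (with the usual convention that the boundary terms at $t=1$ or $t=T$ are replaced by $\pi_k$ and omitted, respectively) and the ``environment'' $\kappa(x_{-t}, \bvy_{-t}) = \pi_{x_1} \prod_{s \neq t} h_{x_s}(\bvy_s) \cdot \prod_{s \neq t, t+1} \gamma_{x_{s-1}, x_s}$. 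Summing over $x_{-t} \in \sX^{T-1}$ then produces the numerator of \eqref{eq:lemma1}, and repeating the argument with an additional sum over $k \in \sX$ produces the denominator. Dividing yields \eqref{eq:lemma1}.

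For the finite mixture specialisation \eqref{eq:lemma1FMM}, I would use the fact that a finite mixture model corresponds to taking $\gamma_{j,k} = \nu_k$ for all $j$, so the rows of $\bvGamma$ are all equal to $\bvnu$. Under this restriction, the transition factors decouple entirely from the conditioning and one obtains $\kappa(x_{-t}, \bvy_{-t}) \cdot \gamma_{x_{t-1},k} \gamma_{k, x_{t+1}} = \nu_k \cdot \prod_{s \neq t} \nu_{x_s} h_{x_s}(\bvy_s)$, after a mild relabelling at the $t=1$ boundary. The sum over $x_{-t}$ then factorises as $\prod_{s \neq t}\bigl( \sum_{j} \nu_j h_j(\bvy_s)\bigr)$, and this same product appears in both numerator and denominator so it cancels, leaving $\nu_k h_k(\bvy_t)/\sum_{j} \nu_j h_j(\bvy_t)$.

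The only mild obstacle is the bookkeeping at the time boundaries $t \in \{1, T\}$, where the ``two-sided'' transition pair $\gamma_{x_{t-1},k} \gamma_{k, x_{t+1}}$ collapses to a single factor (or to $\pi_k$ when $t=1$); I would handle this by adopting the standard conventions $\gamma_{x_0, x_1} := \pi_{x_1}$ and empty products equal to $1$, so that the formulas in the lemma statement remain uniformly valid. No analytic difficulties arise — everything reduces to algebraic manipulation of finite sums under the HMM factorisation.
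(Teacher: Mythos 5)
Your proposal is correct and follows essentially the same route as the paper: apply Bayes' rule, expand the joint probability of $(\bvY\oneT, X_t = k)$ by summing over $x_{-t}$ using the HMM factorisation, pull the $x_t$-dependent factors $h_k(\bvy_t)\,\gamma_{x_{t-1},k}\,\gamma_{k,x_{t+1}}$ out of the environment term $\kappa$, and for the mixture case substitute $\gamma_{j,k} = \nu_k$ so the remaining sum is a constant that cancels between numerator and denominator. Your explicit treatment of the boundary conventions at $t \in \{1, T\}$ is a small bonus the paper leaves implicit.
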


\begin{proof}
Employing a standard abuse of notation in which densities are replaced by probability measures where appropriate, Bayes' rule gives us 
\begin{equation}
    \parPrb{\bveta}{X_t = k \mid \bvY\oneT = \bvy\oneT} = \frac{\parPrb{\bveta}{\bvY\oneT = \bvy\oneT, X_t = k}}{\sumjK \parPrb{\bveta}{\bvY\oneT = \bvy\oneT, X_t = j}}.\label{eq:bayesbasic}
\end{equation}

Now, for any $j \in \sX$, the law of total probability and the conditional independence of the observations $\bvY\oneT$ respectively give 
\begin{align}
    \parPrb{\bveta}{\bvY\oneT = \bvy\oneT, X_t = j}
    &= \sum_{x_{-t} \in \sX^{T-1}} \parPrb{\bveta}{\bvY\oneT = \bvy\oneT, X_t = j, \X_{-t} = x_{-t}}\nonumber\\
    &= \sum_{x_{-t} \in \sX^{T-1}} \parPrb{\bveta}{\bvY\oneT = \bvy\oneT \mid X_t = j, X_{-t} = x_{-t}} \cdot \parPrb{\bveta}{X_t = j, X_{-t} = x_{-t}}\nonumber\\
    &= \sum_{x_{-t} \in \sX^{T-1}} h_j(\bvy_t) \left( \prod_{s \neq t} h_{x_s}(\bvy_s) \right) \cdot \parPrb{\bveta}{X_t = j, X_{-t} = x_{-t}}\nonumber\\
    &= \sum_{x_{-t} \in \sX^{T-1}} h_j(\bvy_t) \left( \prod_{s \neq t} h_{x_s}(\bvy_s) \right)\cdot \pi_{x_1} \prod_{s=2}^{t-1} \gamma_{x_{s-1}, x_s} \cdot \gamma_{x_{t-1}, j} \cdot \gamma_{j, x_{t+1}} \cdot \prod_{s=t+2}^T \gamma_{x_{s-1}, x_s}\nonumber\\
    &= h_j(\bvy_t) \sum_{x_{-t} \in \sX^{T-1}} \kappa(x_{-t}, \bvy_{-t}) \cdot \gamma_{x_{t-1},j} \cdot \gamma_{j, x_{t+1}}. \label{eq:CPconstant}
\end{align}
Inserting \eqref{eq:CPconstant} into both numerator and denominator of \eqref{eq:bayesbasic} yields \eqref{eq:lemma1}. 

If the system constitutes a finite mixture model, then the Markov structure of the underlying state sequence reduces to independence; that is, $X_{t+1} \indep X_{t}$ for all $t \in \{1, \ldots, T\}$, and in particular, the rows of $\bvGamma$ are identical, with $\gamma_{j,k} = \nu_{k}$ for all $j,k \in \sX$. Then from \eqref{eq:CPconstant} we have that
\begin{align}
    \parPrb{\bveta}{\bvY\oneT = \bvy\oneT, X_t = j}
    &= h_j(\bvy_t) \sum_{x_{-t} \in \sX^{T-1}} \kappa(x_{-t}, \bvy_{-t}) \cdot \gamma_{x_{t-1},j} \cdot \gamma_{j, x_{t+1}}\nonumber\\
    &= h_j(\bvy_t) \sum_{x_{-t} \in \sX^{T-1}} \kappa(x_{-t}, \bvy_{-t}) \cdot \nu_j \cdot \nu_{x_{t+1}}\nonumber\\
    &= C \cdot \nu_j \cdot h_j(\bvy_t),  \label{eq:CPconstant2}
\end{align}
where $C = \sum_{x_{-t}} \kappa(x_{-t}, \bvy_{-t}) \cdot \nu_{x_{t+1}}$ is constant with respect to both $j$ and $\bvy_t$. Inserting \eqref{eq:CPconstant2} into both numerator and denominator of \eqref{eq:bayesbasic} yields \eqref{eq:lemma1FMM} upon factoring out $C$. \end{proof}

\begin{theorem}\label{thm:lossfunctions}
Let $\nu_{t,k} = \Prb{X_t = k} = [\bvpi \bvGamma^t]_k$. The expected zero-one loss of the classifications made by local decoding is given by
\begin{equation}\label{eq:postprobHMM}
\ell_{01}(\bveta) 
= \frac{1}{T}\sumtT \sumkK \nu_{t,k} \cdot \parPrb{\bveta}{\left.\frac{h_k(\bvY_t) \cdot \sum_{x_{-t} \in \sX^{T-1}} \kappa(x_{-t}, \bvY_{-t}) \cdot \gamma_{x_{t-1},k} \cdot \gamma_{k, x_{t+1}}}{\max_{j \neq k}\left\{ h_j(\bvY_t) \cdot \sum_{x_{-t} \in \sX^{T-1}} \kappa(x_{-t}, \bvY_{-t}) \cdot \gamma_{x_{t-1},j} \cdot \gamma_{j, x_{t+1}} \right\}} < 1 \right| X_t = k}
\end{equation}
where $\kappa(x_{-t}, \bvy_{-t}) = \pi_{x_1} \prod_{s \neq t} h_{x_s}(\bvy_s) \cdot \prod_{s \neq t,t+1} \gamma_{x_{s-1}, x_s}$. Moreover, if the HMM constitutes a finite mixture model, then \eqref{eq:ZOELapp} reduces to
\begin{equation}\label{eq:postprobFMM}
\ell_{01}(\bveta) 
= \frac{1}{T}\sumtT \sumkK \nu_{t,k} \int_{\R^d} \mathbbm{1}\left\{\frac{\nu_{t,k} \cdot h_k(\bvy_t)}{\max_{j \neq k} \nu_{t,j} \cdot h_j(\bvy_t)} < 1\right\} \dif H_k(\bvy_t).
\end{equation}
\end{theorem}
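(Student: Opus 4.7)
The plan is to unpack the expected zero-one loss by linearity, condition on the true state $X_t$ at each time, and rewrite the local-decoding error event in terms of posterior ratios that \autoref{lemma:CP1} provides in closed form.

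First, I would take expectations term by term:
\begin{align*}
\ell_{01}(\bveta)
= \EE{\bveta}{\frac{1}{T}\sumtT \one{\hX_t \neq X_t}}
= \frac{1}{T}\sumtT \sumkK \Pprob{\bveta}{X_t = k}\,\Pprob{\bveta}{\hX_t \neq k \mid X_t = k},
\end{align*}
where $\Pprob{\bveta}{X_t = k} = [\bvpi \bvGamma^t]_k = \nu_{t,k}$ by the standard HMM marginal. The local decoding rule sets $\hX_t = \argmax_{j \in \sX} \Pprob{\bveta}{X_t = j \mid \bvY\oneT}$, so the event $\{\hX_t \neq k\}$ is equivalent (up to a measure-zero tie set) to
\[
\frac{\Pprob{\bveta}{X_t = k \mid \bvY\oneT}}{\max_{j \neq k}\Pprob{\bveta}{X_t = j \mid \bvY\oneT}} < 1.
\]

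Next, I would invoke \autoref{lemma:CP1}: the common denominator in \eqref{eq:lemma1} cancels when I take the ratio of the posteriors for two different states, so the ratio above collapses to the ratio appearing in the statement of the theorem, with $\kappa(x_{-t}, \bvy_{-t})$ pulled out exactly as defined. Substituting this ratio into the conditional probability and summing gives \eqref{eq:postprobHMM}.

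For the finite mixture specialization, the Markov chain reduces to iid draws with $\gamma_{j,k} = \nu_k$, so \autoref{lemma:CP1} collapses (via \eqref{eq:lemma1FMM}) to $\Pprob{\bveta}{X_t = k \mid \bvY\oneT} \propto \nu_k h_k(\bvy_t)$, which depends only on the contemporaneous $\bvY_t$. The decoding error event therefore becomes a function of $\bvY_t$ alone, and under the conditional law $\bvY_t \mid (X_t = k) \sim H_k$ the other observations $\bvY_{-t}$ become irrelevant for the indicator. The conditional probability is then just the integral
\[
\int_{\R^d} \one{\frac{\nu_{t,k}\,h_k(\bvy_t)}{\max_{j \neq k}\nu_{t,j}\,h_j(\bvy_t)} < 1}\,\dif H_k(\bvy_t),
\]
yielding \eqref{eq:postprobFMM}. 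The only real step of substance is the clean passage from the joint-posterior ratio to an expression in which the HMM kernel $\kappa$ and transitions $\gamma_{x_{t-1},\cdot}\gamma_{\cdot,x_{t+1}}$ factor out; this is exactly the content of \autoref{lemma:CP1}, and the rest is careful bookkeeping and the standard simplification of Markov transitions to a common factor in the iid case.
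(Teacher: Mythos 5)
Your proposal is correct and follows essentially the same route as the paper's proof: decompose the loss by linearity and conditioning on the true state (with $\nu_{t,k}=[\bvpi\bvGamma^t]_k$), characterize the local-decoding error event as a posterior ratio whose common denominator cancels via \autoref{lemma:CP1}, and specialize to the finite mixture case where the posterior depends only on $\bvY_t$ so the conditional probability becomes an integral against $H_k$. No substantive differences from the paper's argument.
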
 

\begin{proof}
Write $\hX_t$ for the prediction made at time $t$. The local decoding algorithm selects for $\hX_t$ the state $k$ which maximizes the conditional probability $\Prb{X_t = k \mid \bvY\oneT = \bvy\oneT}$, which by \autoref{lemma:CP1} is equivalent to 
\begin{align*}
    \hX_t &= \argmax{k \in \sX} \left(\frac{h_k(\bvy_t) \cdot \sum_{x_{-t} \in \sX^{T-1}} \kappa(x_{-t}, \bvy_{-t}) \cdot \gamma_{x_{t-1},k} \cdot \gamma_{k, x_{t+1}}}{\sumjK h_j(\bvy_t) \cdot \sum_{x_{-t} \in \sX^{T-1}} \kappa(x_{-t}, \bvy_{-t}) \cdot \gamma_{x_{t-1},j} \cdot \gamma_{j, x_{t+1}}}\right)\\
    &= \argmax{k \in \sX} \left( h_k(\bvy_t)  \sum_{x_{-t} \in \sX^{T-1}} \kappa(x_{-t}, \bvy_{-t}) \cdot \gamma_{x_{t-1},k} \cdot \gamma_{k, x_{t+1}} \right).
\end{align*}
Thus, foregoing the possibility of ties (which almost surely do not occur), we have that
\begin{align}
    &\phantom{{}\iff{}}\hX_t \neq k \nonumber \\
    & \iff h_k(\bvy_t)  \sum_{x_{-t} \in \sX^{T-1}} \kappa(x_{-t}, \bvy_{-t}) \cdot \gamma_{x_{t-1},k} \cdot \gamma_{k, x_{t+1}} \neq \argmax{j \neq k}\left\{ h_j(\bvy_t)  \sum_{x_{-t} \in \sX^{T-1}} \kappa(x_{-t}, \bvy_{-t}) \cdot \gamma_{x_{t-1},j} \cdot \gamma_{j, x_{t+1}} \right\}\nonumber \\
    & \iff h_k(\bvy_t)  \sum_{x_{-t} \in \sX^{T-1}} \kappa(x_{-t}, \bvy_{-t}) \cdot \gamma_{x_{t-1},k} \cdot \gamma_{k, x_{t+1}} < \max_{j \neq k}\left\{ h_j(\bvy_t)  \sum_{x_{-t} \in \sX^{T-1}} \kappa(x_{-t}, \bvy_{-t}) \cdot \gamma_{x_{t-1},j} \cdot \gamma_{j, x_{t+1}} \right\}\nonumber \\
    & \iff \frac{h_k(\bvy_t) \cdot \sum_{x_{-t} \in \sX^{T-1}} \kappa(x_{-t}, \bvy_{-t}) \cdot \gamma_{x_{t-1},k} \cdot \gamma_{k, x_{t+1}}}{\max_{j \neq k}\left\{ h_j(\bvy_t) \cdot \sum_{x_{-t} \in \sX^{T-1}} \kappa(x_{-t}, \bvy_{-t}) \cdot \gamma_{x_{t-1},j} \cdot \gamma_{j, x_{t+1}} \right\}} < 1. \label{eq:fracl1}
\end{align}

Taking expectations with respect to $(\bvY\oneT, X\oneT)$, the expected zero-one loss satisfies
\begin{align*}
    \ell_{0,1}(\bveta) &= \mathbb{E}\left[\frac{1}{T}\sumtT \onee{\hX_t \neq X_t} \right]\\
    &= \frac{1}{T}\sumtT \parPrb{\bveta}{\hX_t \neq X_t}\\
    &= \frac{1}{T}\sumtT \sumkK \nu_{t,k}\cdot \parPrb{\bveta}{\hX_t \neq  k \mid X_t = k}\\
    &= \frac{1}{T}\sumtT \sumkK \nu_{t,j} \cdot \parPrb{\bveta}{\left.\frac{h_k(\bvY_t) \cdot \sum_{x_{-t} \in \sX^{T-1}} \kappa(x_{-t}, \bvY_{-t}) \cdot \gamma_{x_{t-1},k} \cdot \gamma_{k, x_{t+1}}}{\max_{j \neq k}\left\{ h_j(\bvY_t) \cdot \sum_{x_{-t} \in \sX^{T-1}} \kappa(x_{-t}, \bvY_{-t}) \cdot \gamma_{x_{t-1},j} \cdot \gamma_{j, x_{t+1}} \right\}} < 1 \right| X_t = k}.
\end{align*} 
If the system constitutes a finite mixture model, then \eqref{eq:fracl1} reduces to 
\[
    \frac{\nu_{t,k} \cdot h_k(\bvy_t)}{\max_{j \neq k} \nu_{t,j} \cdot h_k(\bvy_t)} < 1
\]
by \eqref{eq:postprobFMM}, and the same derivation leads to 
\[
    \ell_{01}(\bveta) 
    = \frac{1}{T}\sumtT \sumkK \nu_{t,k} \int_{\R^d} \mathbbm{1}\left\{\frac{\nu_{t,k} \cdot h_k(\bvy_t)}{\max_{j \neq k} \nu_{t,j} \cdot h_j(\bvy_t)} < 1\right\} \dif H_k(\bvy_t)
\]
as desired. \end{proof}

\begin{proposition}
Suppose that $\parPrb{\bveta}{X_t = k \mid X_{t-1}, X_{t+1}} \geq \parPrb{\bveta}{X_t = j \mid X_{t-1}, X_{t+1}}$ for all $j \in \sX$. Then
\begin{align*}
&\phantom{{}={}}\parPrb{\bveta}{\left.\frac{h_k(\bvY_t) \cdot \sum_{x_{-t} \in \sX^{T-1}} \kappa(x_{-t}, \bvY_{-t}) \cdot \gamma_{X_{t-1},k} \cdot \gamma_{k, X_{t+1}}}{\max_{j \neq k}\left\{ h_j(\bvY_t) \cdot \sum_{x_{-t} \in \sX^{T-1}} \kappa(x_{-t}, \bvY_{-t}) \cdot \gamma_{X_{t-1},j} \cdot \gamma_{j, X_{t+1}} \right\}} < 1 \right| X_t = k} \\
&\leq \int_{\R^d} \mathbbm{1}\left\{\frac{h_k(\bvy_t)}{\max_{j \neq k} h_j(\bvy_t)} < 1\right\} \dif H_k(\bvy_t).
\end{align*}
\end{proposition}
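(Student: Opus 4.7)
The plan is to show that the event inside the conditional probability on the left is contained, pointwise in the realization $(\bvY_t, \bvY_{-t})$, in the event $\{h_k(\bvY_t)/\max_{j \neq k} h_j(\bvY_t) < 1\}$, which depends only on $\bvY_t$. Once that set containment is established, monotonicity of conditional probability combined with $\bvY_t \mid (X_t = k) \sim H_k$ immediately rewrites the upper bound as $\int_{\R^d} \mathbbm{1}\{h_k(\bvy_t)/\max_{j \neq k} h_j(\bvy_t) < 1\}\dif H_k(\bvy_t)$, which is exactly the right-hand side.

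To effect the reduction, I would first translate the hypothesis into a statement about transitions. By the Markov property,
\[
    \parPrb{\bveta}{X_t = i \mid X_{t-1} = a, X_{t+1} = b} = \frac{\gamma_{a,i}\gamma_{i,b}}{\sum_l \gamma_{a,l}\gamma_{l,b}},
\]
so (assuming all transition probabilities are strictly positive, as in Assumption 1 of \autoref{app:regconditions}) the hypothesis is equivalent to $\gamma_{a,k}\gamma_{k,b} \geq \gamma_{a,j}\gamma_{j,b}$ for every $a,b \in \sX$ and every $j \neq k$. Introduce the abbreviation $A_i(\bvy_{-t}) := \sum_{x_{-t} \in \sX^{T-1}} \kappa(x_{-t}, \bvy_{-t})\,\gamma_{x_{t-1},i}\gamma_{i,x_{t+1}}$. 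Because $\kappa(x_{-t}, \bvy_{-t}) \geq 0$, multiplying the pairwise inequality by $\kappa$ and summing over $x_{-t}$ yields $A_k(\bvy_{-t}) \geq A_j(\bvy_{-t})$ for every $\bvy_{-t}$ and every $j \neq k$.

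The key step is then to combine these two bounds. Since $h_j(\bvY_t) \geq 0$ and $A_j(\bvY_{-t}) \leq A_k(\bvY_{-t})$, we have $\max_{j \neq k} h_j(\bvY_t)\,A_j(\bvY_{-t}) \leq A_k(\bvY_{-t}) \cdot \max_{j \neq k} h_j(\bvY_t)$; dividing this into $h_k(\bvY_t) A_k(\bvY_{-t})$ shows that the ratio on the left-hand side dominates $h_k(\bvY_t)/\max_{j \neq k} h_j(\bvY_t)$, which establishes the desired set containment. The only subtlety I anticipate is the bookkeeping of keeping all denominators positive (which amounts to the positivity of $\bvGamma$ and of the densities $h_j$ on the relevant support); the crux of the argument is simply that the nonnegative weights $\kappa$ carry the pairwise inequality on the two-step transition factors $\gamma_{a,i}\gamma_{i,b}$ over to their integrated counterparts $A_i(\bvy_{-t})$, and this inequality survives the division by the maximum in the denominator.
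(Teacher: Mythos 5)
Your proof is correct and follows essentially the same route as the paper's: translate the hypothesis via the Markov property into the pairwise inequality $\gamma_{a,k}\gamma_{k,b} \geq \gamma_{a,j}\gamma_{j,b}$, propagate it through the nonnegative $\kappa$-weighted sums to get $A_k \geq A_j$, deduce the set containment, and finish by monotonicity of the conditional probability together with $\bvY_t \mid (X_t = k) \sim H_k$. The only cosmetic difference is that you bound the ratio on the left from below by $h_k(\bvY_t)/\max_{j\neq k} h_j(\bvY_t)$ directly, whereas the paper establishes the equivalent containment of complements; your $A_i$ bookkeeping is, if anything, slightly cleaner than the paper's display.
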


\begin{proof}

To begin with, note that for any $x_{t-1}, x_{t+1} \in \sX$, 
\begin{align}
    &\phantom{{}={}}\parPrb{\bveta}{X_t = k \mid X_{t-1} = x_{t-1}, X_{t+1} = x_{t+1}} \nonumber \\
    &= \frac{\parPrb{\bveta}{X_{t-1} = x_{t-1}, X_t = k, X_{t+1} = x_{t+1}}}{\parPrb{\bveta}{X_{t-1} = x_{t-1}, X_{t+1} = x_{t+1}}} \label{eq:MCcond}\\
    &= \frac{\parPrb{\bveta}{X_t = k, X_{t+1} = x_{t+1} \mid X_{t-1} = x_{t-1}} \cdot \parPrb{\bveta}{X_{t-1} = x_{t-1}}}{\parPrb{\bveta}{X_{t-1} = x_{t-1}, X_{t+1} = x_{t+1}}}\nonumber \\
    &= \frac{\parPrb{\bveta}{X_t = k \mid X_{t-1} = x_{t-1}} \cdot \parPrb{\bveta}{X_{t+1} = x_{t+1} \mid X_t = k}\cdot \parPrb{\bveta}{X_{t-1} = x_{t-1}}}{\parPrb{\bveta}{X_{t-1} = x_{t-1}, X_{t+1} = x_{t+1}}}\nonumber \\
    &= \frac{\gamma_{x_{t-1},k} \cdot \gamma_{k, x_{t+1}}}{\parPrb{\bveta}{X_{t+1} = x_{t+1} \mid X_{t-1} = x_{t-1}}}, \nonumber
\end{align}
and therefore the Markov chain condition $\parPrb{\bveta}{X_t = k \mid X_{t-1}, X_{t+1}} \geq \parPrb{\bveta}{X_t = j \mid X_{t-1}, X_{t+1}}$ for all $j \in \sX$ is equivalent to
\begin{equation}\label{eq:01preineq}
    \gamma_{x_{t-1},k} \cdot \gamma_{k, x_{t+1}} \geq \max_{j \neq k} \left\{ \gamma_{x_{t-1},j} \cdot \gamma_{j, x_{t+1}} \right\}
\end{equation}
for all $x_{t-1}, x_{t+1} \in \sX$. Thus, for any $\bvy\oneT \in \R^{d \times T}$, we have that
\begin{align*}
    &\phantom{{}\iff{}} h_k(\bvy_t) \geq \max_{j \neq k} h_j(\bvy_t) \\
    &\iff h_k(\bvy_t) \sum_{x_{-t} \in \sX^{T-1}} \kappa(x_{-t}, \bvy_{-t}) \geq \max_{j \neq k} \left\{ h_j(\bvy_t) \sum_{x_{-t} \in \sX^{T-1}} \kappa(x_{-t}, \bvy_{-t}) \right\}\\
    &\implies h_k(\bvy_t) \sum_{x_{-t} \in \sX^{T-1}} \kappa(x_{-t}, \bvy_{-t}) \cdot \gamma_{x_{t-1}, k} \cdot \gamma_{k, x_{t+1}} \geq \max_{j \neq k} \left\{ h_j(\bvy_t) \sum_{x_{-t} \in \sX^{T-1}} \kappa(x_{-t}, \bvy_{-t}) \right\} \cdot \max_{j \neq k} \left\{\gamma_{x_{t-1}, j} \cdot \gamma_{j, x_{t+1}}\right\}\\
    &\implies h_k(\bvy_t) \sum_{x_{-t} \in \sX^{T-1}} \kappa(x_{-t}, \bvy_{-t}) \cdot \gamma_{x_{t-1}, k} \cdot \gamma_{k, x_{t+1}} \geq \max_{j \neq k} \left\{ h_j(\bvy_t) \sum_{x_{-t} \in \sX^{T-1}} \kappa(x_{-t}, \bvy_{-t}) \cdot \gamma_{x_{t-1}, j} \cdot \gamma_{j, x_{t+1}}\right\},
\end{align*}
where we have used \eqref{eq:01preineq} and the standard fact that $\sup_x f(x) \cdot \sup_x g(x) \geq \sup_x \left( f(x) \cdot g(x) \right)$ when $f,g > 0$. It follows that as subsets of $\R^{d \times T}$,
\begin{equation*}
    \left\{\bvy\oneT: \frac{h_k(\bvy_t)}{\max_{j \neq k} h_j(\bvy_t)} \geq 1\right\}
    \subseteq \left\{\bvy\oneT: \frac{h_k(\bvy_t) \cdot \sum_{x_{-t} \in \sX^{T-1}} \kappa(x_{-t}, \bvy_{-t}) \cdot \gamma_{x_{t-1}, k} \cdot \gamma_{k, x_{t+1}}}{\max_{j \neq k} \left\{ h_j(\bvy_t) \cdot \sum_{x_{-t} \in \sX^{T-1}} \kappa(x_{-t}, \bvy_{-t}) \gamma_{x_{t-1}, j} \cdot \gamma_{j, x_{t+1}}\right\}} \geq 1\right\},
\end{equation*}
or equivalently,
\begin{equation*}
    \left\{\bvy\oneT: \frac{h_k(\bvy_t) \cdot \sum_{x_{-t} \in \sX^{T-1}} \kappa(x_{-t}, \bvy_{-t}) \cdot \gamma_{x_{t-1}, k} \cdot \gamma_{k, x_{t+1}}}{\max_{j \neq k} \left\{ h_j(\bvy_t) \cdot \sum_{x_{-t} \in \sX^{T-1}} \kappa(x_{-t}, \bvy_{-t}) \gamma_{x_{t-1}, j} \cdot \gamma_{j, x_{t+1}}\right\}} < 1\right\} \subseteq \left\{\bvy\oneT: \frac{h_k(\bvy_t)}{\max_{j \neq k} h_j(\bvy_t)} < 1\right\}.
\end{equation*}
Thus,
\begin{align*}
    &\phantom{{}={}}\parPrb{\bveta}{\left.\frac{h_k(\bvY_t) \cdot \sum_{x_{-t} \in \sX^{T-1}} \kappa(x_{-t}, \bvY_{-t}) \cdot \gamma_{x_{t-1},k} \cdot \gamma_{k, x_{t+1}}}{\max_{j \neq k}\left\{ h_j(\bvY_t) \cdot \sum_{x_{-t} \in \sX^{T-1}} \kappa(x_{-t}, \bvY_{-t}) \cdot \gamma_{x_{t-1},j} \cdot \gamma_{j, x_{t+1}} \right\}} < 1 \right| X_t = k} \\
    &\leq \parPrb{\bveta}{\left.\frac{h_k(\bvY_t)}{\max_{j \neq k} h_j(\bvY_t)} < 1 \right| X_t = k} \\
    &= \int_{\R^d} \mathbbm{1}\left\{\frac{h_k(\bvy_t)}{\max_{j \neq k} h_j(\bvy_t)} < 1\right\} \dif H_k(\bvy_t).
\end{align*}
\end{proof}

\begin{proposition}\label{prop:COapp}
Let $d=2$ and fix $k \in \sX$. As either $|\tau_k| \to 1$ or $|\rho_k| \to 1$ while the other state-dependent parameters stay fixed, we have 
\[
    \int_{\R^d} \mathbbm{1}\left\{\frac{\omega_{k} \cdot h_k(\bvy)}{\max_{j \neq k} \omega_{j} \cdot h_j(\bvy)} < 1\right\} \dif H_k(\bvy) \to 0
\] 

for any $(\omega_1, \ldots, \omega_K) \in \sS^{K-1}$ with $\omega_k > 0$.
\end{proposition}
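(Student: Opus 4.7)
The plan is to rewrite the integral as the $H_k$-probability of the event $\{\omega_k h_k(\bvY) < \max_{j \neq k} \omega_j h_j(\bvY)\}$ and show this probability vanishes, by proving that $h_k(\bvY) \to \infty$ in $H_k$-probability while each $h_j(\bvY)$ stays bounded in $H_k$-probability for $j \neq k$. The first ingredient is the classical copula-theoretic fact (see \cite{nelsen2007introduction}, Theorems 5.1.9--5.1.10) that for a bivariate copula, $|\tau| = 1$ (respectively $|\rho| = 1$) iff the copula equals one of the Fréchet-Hoeffding bounds $M(u,v) = \min(u,v)$ or $W(u,v) = \max(u+v-1, 0)$; and within a smooth parametric family, convergence of $\tau_k$ or $\rho_k$ to its extreme value forces uniform convergence of $C_k$ to the corresponding bound. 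Assume without loss of generality that $\tau_k \to 1$, the other cases being analogous. Setting $V_h := F_{k,h}(Y_h)$, the pair $(V_1, V_2) \sim C_k$ then concentrates on the diagonal of $[0,1]^2$ in the sense that $C_k(T_\delta) \to 1$ for every $\delta > 0$, where $T_\delta = \{(v_1, v_2) \in [0,1]^2 : |v_1 - v_2| < \delta\}$.

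The central step is to show that $c_k(V_1, V_2 \mid \theta_k) \to \infty$ in $C_k$-probability. For any $t > 0$, splitting the integral defining $C_k(\{c_k \leq t\})$ across $T_\delta$ and its complement gives
\[
C_k\bigl(\{c_k(\bvv; \theta_k) \leq t\}\bigr)
= \int_{\{c_k \leq t\}} c_k(\bvv ; \theta_k) \, \dif \bvv
\leq t \cdot |T_\delta| + C_k(T_\delta^c)
\leq 2 t \delta + C_k(T_\delta^c),
\]
where $|\cdot|$ denotes Lebesgue measure. Given $\epsilon > 0$, I would first choose $\delta$ so that $2 t \delta < \epsilon / 2$, and then move $\theta_k$ close enough to the boundary so that $C_k(T_\delta^c) < \epsilon / 2$. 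Since $\bvY \sim H_k$ corresponds to $\bvV \sim C_k$ under the fixed marginals, and $h_k(\bvY) = c_k(V_1, V_2 \mid \theta_k) \cdot f_{k,1}(Y_1) \cdot f_{k,2}(Y_2)$ with the marginal laws of $Y_1, Y_2$ under $H_k$ fixed in $\theta_k$, the product $f_{k,1}(Y_1) f_{k,2}(Y_2)$ is a tight, almost-surely positive random variable whose distribution does not depend on $\theta_k$, and hence is bounded below in probability. Combining these two facts yields $h_k(\bvY) \to \infty$ in $H_k$-probability.

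For $j \neq k$, the density $h_j$ is fixed by hypothesis, and since the marginals of $H_k$ are also fixed in $\theta_k$, the family $\{H_k^{(\theta_k)}\}$ has tight marginals and is therefore tight on $\R^2$. By continuity of each $h_j$ on compact sets, $h_j(\bvY)$ is bounded in $H_k$-probability uniformly in $\theta_k$. Since $\omega_k > 0$, it follows that $\omega_k h_k(\bvY) \big/ \max_{j \neq k} \omega_j h_j(\bvY) \to \infty$ in $H_k$-probability, so the event in the indicator has vanishing $H_k$-probability and the integral tends to $0$. The main obstacle will be the rigorous justification of the first ingredient: uniform convergence of $C_k$ to the Fréchet-Hoeffding bound as $|\tau_k|$ or $|\rho_k|$ approaches $1$ is well known for the one-parameter families used in practice (Gauss, Frank, Clayton, Gumbel, Joe), but a fully general statement would require a mild regularity condition on the parameterization (most cleanly, continuity of the map $\theta_k \mapsto C_k$ in the sup-norm topology on $[0,1]^2$).
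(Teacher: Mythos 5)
Your argument is correct in outline but takes a genuinely different route from the paper's. You prove a \emph{separation of densities}: $h_k(\bvY)\to\infty$ in $H_k$-probability (via the neat bound $C_k(\{c_k\le t\})\le 2t\delta + C_k(T_\delta^c)$), while each competitor $h_j(\bvY)$ stays bounded in probability. The paper never lets the copula density blow up; instead it applies Boole's inequality to reduce to pairwise comparisons $\int\onee{\omega_j h_j>\omega_k h_k}\dif H_k$, transforms to the copula scale, splits over a diagonal band $B_\delta$ and its complement, kills the complement by Chebyshev using the identity $\E{(U_1-U_2)^2}=(1-\rho_k)/6$, and kills the band by absolute continuity of the competing measure $H_j$ (its mass on a shrinking neighbourhood of a null curve vanishes). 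Two consequences of the difference are worth noting. First, the ``main obstacle'' you flag --- convergence of $C_k$ to the Fr\'echet--Hoeffding bound, which you believe needs a regularity condition on the parametrization --- is in fact automatic: since $C_k\le M$ pointwise and $1-\rho_k = 12\int_{[0,1]^2}\left(M(\bvu)-C_k(\bvu)\right)\dif\bvu$ with a nonnegative integrand, $\rho_k\to1$ forces $C_k\to M$ in $L^1$ and hence (by the uniform Lipschitz property of copulas) uniformly, and the case $|\tau_k|\to1$ reduces to $|\rho_k|\to1$ via the classical inequality $|3\tau-2\rho|\le1$; the paper's Chebyshev bound sidesteps even this and delivers the diagonal concentration $C_k(T_\delta^c)\le(1-\rho_k)/(6\delta^2)$ in one line. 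Second, your step ``$h_j(\bvY)$ is bounded in $H_k$-probability by continuity of $h_j$ on compacts'' requires local boundedness of the competitor densities, a hypothesis the paper does not impose --- its band argument needs only that $H_j$ is absolutely continuous. So your proof is valid under slightly stronger (but mild and standard) regularity and is arguably the more intuitive of the two, while the paper's is more economical in its assumptions.
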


\begin{proof}
For ease of notation, we omit the subscript $t$ in what follows. First, note that by Theorem 3.6 of \cite{embrechts2001modelling}, $\rho_k = 1$ if and only if $\tau_k = 1$, either of which are equivalent to $C_k$ being the comonotonicity copula. Similarly, $\rho_k = -1$ if and only if $\tau_k = -1$, corresponding to the countermonotonicity copula.

Let $\epsilon > 0$, and start by assuming $\rho_k \to 1$. Then
\begin{align}
    \int_{\R^d} \onee{\frac{\omega_{k} \cdot h_k(\bvy)}{\max_{j \neq k} \omega_{j} \cdot h_j(\bvy)} < 1 }\dif H_k(\bvy)
    &=  \int_{\R^2} \onee{\max_{j \neq k} \omega_{j} \cdot h_j(\bvy) > \omega_{k} \cdot h_k(\bvy) }\dif H_k(\bvy) \nonumber\\
    &=  \int_{\R^2} \onee{\bigvee_{j \neq k} \omega_{j} \cdot h_j(\bvy) > \omega_{k} \cdot h_k(\bvy) }\dif H_k(\bvy) \nonumber\\
    &\leq \sum_{j \neq k}\int_{\R^2} \onee{\omega_{j} \cdot h_j(\bvy) >  \omega_{k} \cdot h_k(\bvy)} \dif H_k(\bvy),\label{eq:boole}
\end{align}
where we have used Boole's inequality in the last line. Consider now the integral 
\begin{equation}\label{eq:integraljgeqk}
    \int_{\R^2} \onee{\omega_{j} \cdot h_j(\bvy) >  \omega_{k} \cdot h_k(\bvy)} \dif H_k(\bvy)
\end{equation}
inside the summation of \eqref{eq:boole}. If $\omega_j = 0$ then \eqref{eq:integraljgeqk} is zero and contributes nothing to the sum, so we may assume $\omega_j > 0$ and let $\bar{\omega}_j = \omega_j/\omega_k$. Further, let $d_j(\bvu) = c_j(F_{j,1}(F_{k,1}^{-1}(u_1)), F_{j,2}(F_{k,2}^{-1}(u_2)) \cdot \prod_{h=1}^2 f_{j,h}(F_{k,h}^{-1}(u_h))/f_{k,h}(F_{k,h}^{-1}(u_h))$ and $\bvu = \left(F_{k,1}(y_1), F_{k,2}(y_2)\right)$ so that by a change of variables, \eqref{eq:integraljgeqk} becomes
\begin{align*}
\int_{\R^2} \onee{\bar{\omega}_j h_j(\bvy) > h_k(\bvy)} h_k(\bvy) \dif \bvy &= \int_{[0,1]^2} \onee{\bar{\omega}_j d_j(\bvu) > c_k(\bvu)} c_k(\bvu)\dif \bvu\\
&= \int_{B_\delta} \onee{\bar{\omega}_j d_j(\bvu) > c_k(\bvu)} c_k(\bvu)\dif \bvu +\int_{B^c_\delta} \onee{\bar{\omega}_j d_j(\bvu) > c_k(\bvu)} c_k(\bvu)\dif \bvu,
\end{align*}
where $B_\delta = \{\bvu \in [0,1]^2: |u_1-u_2|\leq \delta\}$, for some $\delta > 0$ to be specified later. Now for $(U_1,U_2) \sim C_k$, we have, by Markov's inequality,
\begin{equation}\label{eq:ZOproof1}
     \int_{B^c_\delta} \onee{\bar{\omega}_j d_j(\bvu) > c_k(\bvu)} c_k(\bvu)\dif \bvu 
     \leq \int_{B^c_\delta} c_k(\bvu) \dif \bvu
     = \Prb{|U_1 - U_2| > \delta}
     \leq \frac{\E{(U_1-U_2)^2}}{\delta^2}
     = \frac{1 - \text{Corr}(U_1,U_2)}{6\delta^2},
\end{equation}
On the other hand,
\begin{align}
    \int_{B_\delta} \onee{\bar{\omega}_j d_j(\bvu) > c_k(\bvu)} c_k(\bvu)\dif \bvu &< \int_{B_\delta} \onee{\bar{\omega}_j d_j(\bvu) > c_k(\bvu)} \bar{\omega}_j d_j(\bvu)\dif \bvu \nonumber \\
    &\leq \int_{B_\delta} \bar{\omega}_j d_ j(\bvu)\dif \bvu \nonumber \\
    &= \bar{\omega}_j \int_{B_\delta} c_j(F_{j,1}(F_{k,1}^{-1}(u_1)), F_{j,2}(F_{k,2}^{-1}(u_2)) \cdot \prod_{h=1}^2 f_{j,h}(F_{k,h}^{-1}(u_h))/f_{k,h}(F_{k,h}^{-1}(u_h)) \dif \bvu \label{eq:integralBdelta}.
\end{align}
Let $B'_\delta = \{\bvw \in \R^2: |F_{k,1}(w_1) - F_{k,2}(w_2)| \leq \delta\}$. Making the substitution $\bvw =(F_{k,1}^{-1}(u_1), F_{k,2}^{-1}(u_2))$, \eqref{eq:integralBdelta} becomes
\begin{align*}
    &\phantom{{}={}}\bar{\omega}_j \int_{B'_\delta} c_j(F_{j,1}(w_1),F_{j,2}(w_2)) \cdot \prod_{h=1}^2 \left(f_{j,h}(w_h)/f_{k,h}(w_h)\right) \cdot \prod_{h=1}^2 f_{k,h}(w_h) \dif \bvw \\
    &= \bar{\omega}_j \int_{B'_\delta} c_j(F_{j,1}(w_1),F_{j,2}(w_2)) \cdot \prod_{h=1}^2 f_{j,h}(w_h) \dif \bvw \\
    &= \bar{\omega}_j \int_{B'_\delta} h_j(\bvw) \dif \bvw \\
    &= \bar{\omega}_j \cdot \mu_j(B'_\delta),
\end{align*}
where $\mu_j$ is the measure on $\R^2$ induced by $H_j$. Since $\mu_j$ is absolutely continuous with respect to Lebesgue measure $\Lambda(\cdot)$ on $\R^2$, there exists some $\delta_\epsilon$ such that whenever $\Lambda(B'_\delta) < \delta_\epsilon$, we have $\mu_j(B'_\delta) < \epsilon/((K-1)\bar{\omega}_j)$. Since we have $\Lambda(B'_\delta) \to 0$ as $\delta \to 0$, there must exist some $\delta^*$ small enough that $\Lambda(B'_{\delta^*}) < \delta_\epsilon$. Making the choice $\delta = \delta^*$ thus gives 
\begin{equation}\label{eq:ZOproof2}
    \int_{B_\delta} \onee{\bar{\omega}_j d_j(\bvu) > c_k(\bvu)} c_k(\bvu)\dif \bvu < \frac{\epsilon}{K-1}. 
\end{equation}
Now, $\rho_k = \text{Corr}(U_1,U_2)$ because $H_k$ has continuous margins \citep{nelsen2007introduction}, and so upon combining \eqref{eq:ZOproof1} and \eqref{eq:ZOproof2}, we obtain 
\begin{equation}\label{eq:ZOproof4}
    \limsup_{\rho_k \to 1}\int_{\R^2} \onee{\bar{\omega}_j h_j(\bvy) > h_k(\bvy)} h_k(\bvy) \dif \bvy < \limsup_{\rho_k \to 1} \left(\frac{1 - \rho_k}{6\delta^2} + \frac{\epsilon}{K-1} \right) \leq \frac{\epsilon}{K-1}.
\end{equation} 
Inserting \eqref{eq:ZOproof4} into \eqref{eq:boole} gives 
\begin{equation}\label{eq:ZOproofconv}
    \limsup_{\rho_k \to 1} \int_{\R^2} \mathbbm{1}\left\{\frac{\omega_{k} \cdot h_k(\bvy)}{\max_{j \neq k} \omega_{j} \cdot h_j(\bvy)} < 1\right\} \dif H_k(\bvy) \leq \epsilon.
\end{equation}
Finally, if instead we have $\rho_k \to -1$, we require the numerator of \eqref{eq:ZOproof1} to be $1 + \text{Corr}(U_1,U_2)$, and this only needs the modification $B_\delta = \{\bvu \in [0,1]^2: |u_1+u_2-1| \leq \delta\}$ and similarly for $B'_\delta$, aside from which the proof is identical. In either case, \eqref{eq:ZOproofconv} can be made arbitrarily close to 0, which completes the proof. \end{proof}

\begin{corollary} \label{cor:radsym01}
Let $d=2$ and $K=2$. Suppose $\mathcal{C} = \{C_{\theta}(\cdot,\cdot):\theta \in \Theta\}$ is a parametric family of radially symmetric bivariate copulas satisfying $C_{-\theta}(u,v) = u - C_{\theta}(u,1-v) = v - C_{\theta}(1-u,v)$. If our HMM is a finite mixture model with $C_1 = C_\theta$ and $C_2 = C_{-\theta}$ with identical marginals and $\Prb{X_t=1} = \Prb{X_t =2} = 1/2$ for all $t \geq 1$, then \begin{equation}\ell_{01}(\theta) = 1 - 2 \cdot C_{\theta}\left(\frac{1}{2},\frac{1}{2}\right).\label{eq:01loss}\end{equation}In particular, for $\theta > 0$ we have $\ell_{01}^\mathrm{Frank}(\theta) = 1 - 2 \llog{\cosh{\frac{\theta}{4}}}/\theta$, $\ell_{01}^\mathrm{Gauss}(\rho) = \cos^{-1}(\rho)/\pi$, and $\ell_{01}^\mathrm{FGM}(\theta) = 1/2 - \theta/8$. 
\end{corollary}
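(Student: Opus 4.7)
The plan is to invoke the finite-mixture specialization of \autoref{thm:ZOloss}. With $K = 2$, equal priors $\nu = 1/2$, and identical continuous marginals across the two states, the probability-integral change of variables $(u, v) = (F(y_1), F(y_2))$ collapses the loss to $\ell_{01}(\theta) = \tfrac{1}{2}(A_1 + A_2)$, where
\[
    A_k = \int_{[0,1]^2} \onee{c_k(u,v) < c_{3-k}(u,v)} \cdot c_k(u, v) \, du\, dv
\]
with $C_1 = C_\theta$ and $C_2 = C_{-\theta}$. Differentiating the reflection hypothesis $C_{-\theta}(u,v) = u - C_\theta(u, 1-v) = v - C_\theta(1-u,v)$ in both arguments produces the pointwise identities $c_{-\theta}(u,v) = c_\theta(u, 1-v) = c_\theta(1-u,v)$, and substituting $v \mapsto 1-v$ inside $A_2$ turns its integrand into that of $A_1$, so $\ell_{01}(\theta) = A_1$. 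Evaluating the reflection equation at $u = v = 1/2$ also gives the identity $C_\theta(1/2, 1/2) + C_{-\theta}(1/2, 1/2) = 1/2$.

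I would then partition $[0,1]^2$ into the quadrants $\mathrm{SW}, \mathrm{SE}, \mathrm{NW}, \mathrm{NE}$ delimited by $u = 1/2$ and $v = 1/2$. Radial symmetry $c_\theta(u,v) = c_\theta(1-u, 1-v)$ gives $\int_{\mathrm{SW}} c_\theta = \int_{\mathrm{NE}} c_\theta = C_\theta(1/2,1/2)$, while the substitutions $v \mapsto 1-v$ and $u \mapsto 1-u$ combined with the reflection identities give $\int_{\mathrm{NW}} c_\theta = \int_{\mathrm{SE}} c_\theta = C_{-\theta}(1/2, 1/2)$. The desired formula then reduces to identifying the ``bad'' set $\{c_\theta < c_{-\theta}\}$ with the anti-diagonal quadrants $\mathrm{NW} \cup \mathrm{SE} = \{(u-1/2)(v-1/2) < 0\}$ up to $c_\theta$-null sets, for then
\[
    A_1 = \int_{\mathrm{NW} \cup \mathrm{SE}} c_\theta \, du\, dv = 2 C_{-\theta}(1/2, 1/2) = 1 - 2 C_\theta(1/2, 1/2).
\]

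The main obstacle is this last identification, which encodes the positive-dependence character of the family for $\theta > 0$ and is not forced by the two stated symmetries alone. For the three families listed it can be checked directly: $c_\theta - c_{-\theta} = 2\theta(2u-1)(2v-1)$ for the FGM copula; the Gaussian density comparison reduces to the sign of $\rho\,\Phi^{-1}(u)\Phi^{-1}(v)$; and a slightly longer algebraic manipulation of the explicit Frank density yields the same conclusion. The three named closed forms then follow by evaluating $C_\theta(1/2, 1/2)$: the value is $1/4 + \theta/16$ for FGM, equals $1/4 + \arcsin(\rho)/(2\pi)$ for the Gaussian copula by Sheppard's identity (so that $1 - 2 C_\rho(1/2,1/2) = \cos^{-1}(\rho)/\pi$), and equals $1/4 + \llog{\cosh(\theta/4)}/\theta$ for Frank after the simplification $e^{-\theta/2} + 1 = 2 e^{-\theta/4}\cosh(\theta/4)$ inside the logarithm.
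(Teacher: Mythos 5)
Your proposal follows the paper's own route almost step for step: the finite-mixture specialization of \autoref{thm:ZOloss}, the probability-integral change of variables to the copula scale, the reflection argument showing the two state-indexed integrals coincide, and the quadrant decomposition of $[0,1]^2$. Your closed-form evaluations of $C_\theta(\tfrac12,\tfrac12)$ are also correct; note that they yield $\ell_{01}^{\mathrm{Frank}}(\theta) = \tfrac12 - 2\llog{\cosh(\theta/4)}/\theta$, consistent with Section 2.2 but revealing a typo in the corollary's displayed Frank formula (the leading $1$ should be $\tfrac12$).

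The one genuine divergence is the identification of $\{c_\theta < c_{-\theta}\}$ with the anti-diagonal quadrants, which you flag as not being forced by the two stated symmetries and instead verify family by family. The paper claims to derive it in general: writing $A = \{c_\theta(u_1,u_2) > c_\theta(u_1,1-u_2)\}$ and $A_2 = A \cap ((0,1/2)\times(1/2,1))$, it argues $A_2 = A^c \cap (0,1/2)^2$ and concludes $A = (0,1/2)^2 \cup (1/2,1)^2$. Your caution is warranted: that chain of equalities passes through the substitution $u_2 \mapsto 1-u_2$, so what is actually established is that the \emph{reflection} of $A_2$ equals $A^c\cap(0,1/2)^2$; since this reflection carries the integrand $c_\theta$ to $c_{-\theta}$ rather than preserving it, the identity $\int_{A_1}c_\theta + \int_{A_2}c_\theta = C_\theta(\tfrac12,\tfrac12)$ does not follow. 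Indeed, the general display \eqref{eq:01loss} can fail under the stated hypotheses: take $c_\theta(u,v) = 1 + \theta\sin(4\pi u)\sin(4\pi v)$ for $\theta \in [-1,1]$. Each member is a radially symmetric copula density satisfying $c_{-\theta}(u,v) = c_\theta(u,1-v) = c_\theta(1-u,v)$ and hence $C_{-\theta}(u,v) = u - C_\theta(u,1-v)$, yet $\{c_\theta > c_{-\theta}\}$ is a checkerboard rather than the two diagonal quadrants, $C_\theta(\tfrac12,\tfrac12) = \tfrac14$ for every $\theta$, and the loss $1 - \int_{\{c_\theta > c_{-\theta}\}} c_\theta(\bvu)\dif\bvu$ is strictly below $1 - 2C_\theta(\tfrac12,\tfrac12) = \tfrac12$ for $\theta \neq 0$. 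The corollary therefore needs an additional ordering hypothesis --- precisely the one you isolate, namely that $c_\theta/c_{-\theta} > 1$ exactly on $(0,1/2)^2 \cup (1/2,1)^2$ --- and your direct sign computations for the FGM, Gauss and Frank densities are what actually close the argument for the three cases the paper uses.
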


\begin{remark}
In all three cases, $\ell_{01}(0) = 1/2$, which is in line with intuition. While $\ell_{01}^\mathrm{Frank}(\pm \infty) = \ell_{01}^\mathrm{Gauss}(\pm 1) = 0$, we only have $\ell_{01}^\mathrm{FGM}(1) = 3/8$. That $\ell_{01}^\mathrm{FGM}(1) \neq 0$ arises from the fact that the FGM copula can model only weak dependence \citep{nelsen2007introduction}, and thus is not comprehensive. The Plackett family of copulas is radially symmetric, but with parameter space $(0,\infty)$, it does not satisfy the relationship in the statement of the corollary. It does, however, satisfy $C_{1/\theta}(u,v) = u -C_\theta(u,v)$ \citep{nelsen2007introduction}, for which a modification of \eqref{eq:01loss} can be developed.
\end{remark}

\begin{proof}
Since $\nu_{t,k} = 1/2$ for all $t \geq 1$ and $k\in \sX$, using \autoref{thm:lossfunctions}, the zero-one loss in this situation reduces to
\begin{align}
    \ell_{01}(\bveta) 
    &= \frac{1}{2} \sum_{k=1}^2 \int_{\R^2} \onee{\frac{h_k(\bvy)}{\max_{j \neq k} h_j(\bvy)} < 1} \dif H_k(\bvy) \nonumber \\
    &= \frac{1}{2} \left(\int_{\R^2} \onee{h_1(\bvy) < h_2(\bvy)} \dif H_1(\bvy) + \int_{\R^2} \onee{h_2(\bvy) < h_1(\bvy)} \dif H_2(\bvy) \right)\nonumber \\
    &= 1 - \frac{1}{2} \left(\int_{\R^2} \onee{h_1(\bvy) > h_2(\bvy)} \dif H_1(\bvy) + \int_{\R^2} \onee{h_2(\bvy) > h_1(\bvy)} \dif H_2(\bvy) \right)\nonumber \\
    &= 1 - \frac{1}{2} \left(\int_{\R^2} \onee{h_\theta(\bvy) > h_{-\theta}(\bvy)} \dif H_\theta(\bvy) + \int_{\R^2} \onee{h_{-\theta}(\bvy) > h_\theta(\bvy)} \dif H_{-\theta}(\bvy) \right), \label{eq:lossRS}
\end{align}
where we have made the change of notation $h_\theta := h_1$ and $h_{-\theta} := h_2$ for convenience. Now, $h_\theta(\bvy) = c_\theta\left(F_1(y_1),F_2(y_2)\right) f_1(y_1) f_2(y_2)$, so by cancelling out marginal densities and writing $u_h = F_h(y_h) \in [0,1]$ for $h=1,2$, we have
\[
    h_\theta(\bvy) > h_{-\theta}(\bvy) \iff c_{\theta}(u_1,u_2) > c_{-\theta}(u_1,u_2),
\]
and therefore after applying the change of variables $\bvu = (F_1(y_1), F_2(y_2))$, \eqref{eq:lossRS} becomes
\begin{equation}\label{eq:lossRS2}
    1 - \frac{1}{2}\left(\int_0^1\int_0^1\onee{c_\theta(u_1,u_2) > c_{-\theta}(u_1,u_2)} c_{\theta}(u_1,u_2) \dif \bvu + \int_0^1\int_0^1\onee{c_{-\theta}(u_1,u_2) > c_{\theta}(u_1,u_2)} c_{-\theta}(u_1,u_2) \dif \bvu\right).
\end{equation}
Now, the two integrals in \eqref{eq:lossRS2} are \emph{equal}; to show this, we make the temporary change of variables $w = 1-u_2$ and repeatedly use the implication $c_{-\theta}(u_1,u_2) = c_{\theta}(u_1,1-u_2) = c_\theta(1-u_1, u_2)$ to obtain
\begin{align*}
    &\phantom{{}={}}\int_0^1\int_0^1\onee{c_{-\theta}(u_1,u_2) > c_{\theta}(u_1,u_2)} c_{-\theta}(u_1,u_2) \dif \bvu\\
    &= -\int_0^1\int_1^0\onee{c_{-\theta}(u_1,1-w) > c_{\theta}(u_1,1-w)} c_{-\theta}(u_1,1-w) \dif w \dif u_1\\
    &= \int_0^1\int_0^1\onee{c_{\theta}(u_1,w) > c_{-\theta}(u_1,w)} c_{\theta}(u_1,w) \dif w \dif u_1\\
    &= \int_0^1\int_0^1\onee{c_{\theta}(u_1,u_2) > c_{-\theta}(u_1,u_2)} c_{\theta}(u_1,u_2) \dif \bvu
\end{align*}
Moreover, the common value of the two integrals is equal to 
\begin{equation}\label{eq:radialintegral}
    \int_0^1\int_0^1\onee{c_{\theta}(u_1,u_2) > c_{\theta}(u_1,1-u_2)} c_{\theta}(u_1,u_2)  \dif \bvu,
\end{equation}
so 
\[
    \ell_{01}(\bveta) = 1 - \int_0^1\int_0^1\onee{c_{\theta}(u_1,u_2) > c_{\theta}(u_1,1-u_2)} c_{\theta}(u_1,u_2)  \dif \bvu.
\]
It remains to evaluate \eqref{eq:radialintegral}. Let $A = \{\bvu: c_{\theta}(u_1,u_2) > c_{\theta}(u_1,1-u_2)\}$, and partition it as $A = A_1 \cup A_2 \cup A_3 \cup A_4$ with \begin{align*}
    A_1 &= A \cap (0,1/2)^2\\
    A_2 &= A \cap \left((0,1/2) \times (1/2,1)\right)\\
    A_3 &= A \cap \left((1/2,1) \times (0,1/2)\right)\\
    A_4 &= A \cap (1/2,1)^2.
\end{align*}Then \begin{align*}
    A_2 &= \{\bvu : c_\theta(u_1,u_2) > c_\theta(u_1,1-u_2), u_1 \in (0,1/2), u_2 \in (1/2,1)\}\\
    &= \{\bvu : c_\theta(1-u_1,1-u_2) > c_\theta(u_1,1-u_2), u_1 \in (0,1/2),u_2 \in (1/2,1)\}\\
    &= \{\bvu : c_\theta(1-u_1,u_2) > c_\theta(u_1,u_2), u_1 \in (0,1/2),u_2 \in (0,1/2)\}\\
    &= \{\bvu : c_\theta(u_1,u_2) \leq c_\theta(u_1,1-u_2), u_1 \in (0,1/2),u_2 \in (0,1/2)\}\\
    &= A^c \cap (0, 1/2)^2
\end{align*}and so $A_1 \cup A_2 = (0,1/2)^2$. With similar manipulations, we can obtain $A_3 \cup A_4 = (1/2, 1)^2$ and hence $A = (0, 1/2)^2 \cup (1/2, 1)^2$. Thus, \eqref{eq:radialintegral} reduces to \begin{align*}
    &\phantom{{}={}} \int_{0}^{1/2}\int_{0}^{1/2} c_{\theta}(u_1,u_2)  \dif u_2 \dif u_1 + \int_{1/2}^1 \int_{1/2}^1 c_{\theta}(u_1,u_2)  \dif u_2 \dif u_1 \\
    &= \int_{0}^{1/2}\int_{0}^{1/2} c_{\theta}(u_1,u_2)  \dif u_2 \dif u_1 + \int_{1/2}^1 \int_{1/2}^1 c_{\theta}(1-u_1,1-u_2)  \dif u_2 \dif u_1 \\
    &= \int_{0}^{1/2}\int_{0}^{1/2} c_{\theta}(u_1,u_2)  \dif u_2 \dif u_1 + \int_{0}^{1/2}\int_{0}^{1/2} c_{\theta}(u_1,u_2)  \dif u_2 \dif u_1 \\
    &= 2 \cdot C_\theta\left(\frac{1}{2},\frac{1}{2}\right),
\end{align*}which proves the general result. The specific cases for the Frank and FGM families follow from direct calculation, and that for the Gauss copula follows from a well-known result about quadrant probabilities for the bivariate normal distribution (Formula 23.6.19 of \cite{abramowitz1988handbook}). \end{proof}

\begin{remark}
This expression is very close to \emph{Blomqvist's $\beta$}. See Section 5.1.4 of \citep{nelsen2007introduction}.
\end{remark}

\end{document}